\newtheorem{theorem}{Theorem}
\newtheorem*{theorem*}{Theorem}
\newtheorem{corollary}{Corollary}
\newtheorem*{corollary*}{Corollary}
\newtheorem{lemma}{Lemma}
\newtheorem*{lemma*}{Lemma}
\newtheorem*{proposition*}{Proposition}
\theoremstyle{definition}
\newtheorem*{definition*}{Definition}
\theoremstyle{remark}
\newtheorem*{remark*}{Remark}
\newtheorem*{example*}{Example}
\def\ba{\begin{array}}
\def\ea{\end{array}}
\def\be{\begin{equation}}
\def\ee{\end{equation}}
\newcommand{\norm}[1]{\left\Vert#1\right\Vert}
\newcommand{\abs}[1]{\left\vert#1\right\vert}
\newcommand{\set}[1]{\left\{#1\right\}}
\newcommand{\Real}{\mathbb R}
\newcommand{\eps}{\varepsilon}
\newcommand{\To}{\longrightarrow}
\newcommand{\BX}{\mathbf{B}(X)}
\newcommand{\A}{\mathcal{A}}
\newcommand{\proj}[1]{\ket{#1}\bra{#1}}
\newcommand{\ket}[1]{|#1\rangle}
\newcommand{\bra}[1]{\langle#1|}
\newcommand{\tr}{\rm tr}
\begin{document}
\title{Parameterized bipartite entanglement measures and entanglement constraints}
\author{Wen Zhou}
\email{2230501027@cnu.edu.cn}
\affiliation{School of Mathematical Sciences, Capital Normal University, Beijing 100048, China}
\author{Zhong-Xi Shen}
\email{18738951378@163.com}
\affiliation{School of Mathematical Sciences, Capital Normal University, Beijing 100048, China}
\author{Dong-Ping Xuan}
\email{2230501014@cnu.edu.cn}
\affiliation{School of Mathematical Sciences, Capital Normal University, Beijing 100048, China}
\author{Zhi-Xi Wang}
\email{wangzhx@cnu.edu.cn}
\affiliation{School of Mathematical Sciences, Capital Normal University, Beijing 100048, China}
\author{Shao-Ming Fei}
\email{feishm@cnu.edu.cn}
\affiliation{School of Mathematical Sciences, Capital Normal University, Beijing 100048, China}
\begin{abstract}
In this paper, we propose a novel class of parameterized entanglement measures which are named as $G_\omega$-concurrence ($G_\omega$C) ($0<\omega\leq1$), and demonstrate comprehensively that they satisfy all the necessary axiomatic conditions required for an entanglement measure. Furthermore, we derive an analytical formula relating $G_\omega$C to concurrence for the range of $0.85798\leq\omega\leq1$ within two-qubit systems. Additionally, we prove a new polygamy relation of multiqubit quantum entanglement in terms of $G_\omega$-concurrence of assistance ($G_\omega$CoA). However, it fails to obey the monogamy relation, but we have demonstrated that the squared $G_\omega$-concurrence (S$G_\omega$C) does obeys a general monogamy relation in an arbitrary $N$-qubit mixed state. Based on the monogamy properties of S$G_\omega$C, we can construct the corresponding multipartite entanglement indicators, which can detect all genuine multiqubit entangled states even in the case of $N$-tangle vanishes. In addition, for multipartite higher-dimensional systems, it is illustrated that S$G_\omega$C
still has the applicability of the monogamy relation.
\end{abstract}



\maketitle

\section{Introduction}\label{I}
Quantum entanglement stands out as an important physical resource in quantum information tasks such as quantum teleportation \cite{Bennett18951993}, quantum dense coding \cite{Bennett28811992}, quantum secret sharing \cite{Hillery18291999} and quantum cryptography \cite{Gisin1452002}. One of the key challenges within the theory of quantum entanglement is the quantification of entanglement \cite{Vedral16191998,Vedral22751997}. Some interesting entanglement measures for bipartite entangled systems have been presented, such as the concurrence \cite{Hill50221997,Rungta0423152001,Wootters22451998}, entanglement of formation \cite{Bennett38241996,Horodecki32001}, negativity \cite{Zyczkowski8831998,Vidal0323142002}, Tsallis-$q$ entropy of entanglement \cite{Kim0623282010} and R$\acute{e}$nyi-$\alpha$ entropy of entanglement \cite{Gour0121082007,Kim4453052010}.

Another key point is the study of the sharability and distribution of quantum entanglement in multipartite systems. Coffman, Kundu, and Wootters introduced first the quantitative description of the monogamy of entanglement (MoE) satisfied by the squared concurrence \cite{Coffman0523062000} in three-qubit quantum systems. Osborne and Verstraete extended the monogamy relation to the case of $N$-qubit case \cite{Osborne2205032006}. Later, extensive researches have been conducted on the distribution of entanglement in multipartite quantum systems by employing various entanglement measures such as the squared entanglement of formation \cite{Oliveira0343032014,Bai1005032014,Bai0623432014}, the squared R$\acute{e}$nyi-$\alpha$ entropy \cite{Song0223062016}, the squared Tsallis-$q$ entropy \cite{Luo0623402016} and the squared Unified-$(r,s)$ entropy \cite{Khan164192019}.

Dual to a bipartite entanglement measure is the entanglement of assistance, which generally gives rise to polygamy relations. For arbitrary tripartite systems, Gour et al. in Ref.~\cite{Gour0121082007} established the first polygamy inequality by using the squared concurrence of assistance (CoA). Similar to the monogamy relation, the corresponding polygamy relations have also been established for different kinds of assisted entanglement measures \cite{Kim0123342018,Yang5452019}.

The monogamy relations play an important role in quantum information theory \cite{Seevinck2732010}, condensed-matter physics \cite{Ma3992011} and even black-hole physics \cite{Ve1072013}. A monogamy inequality is related to a residual quantity~\cite{Zhu0243042014}. For example, the squared concurrence corresponds to the tangle. It is acknowledged that the tangle is unable to detect the entanglement of states like the W state \cite{Kim4453052010}. To compensate this deficiency, it is imperative to establish alternative monogamy relations beyond squared concurrence.

The following outlines the structure of the paper. In Sec.\ref{s1}, we introduce a class of bipartite entanglement measures
$G_\omega$C ($0<\omega\leq1$) that satisfies criteria such as faithfulness, invariance under local unitary transformations, (strong) monotonicity and convexity. In Sec.\ref{III}, an analytic relation between $G_\omega$C and concurrence is provided for $0.85798\leq\omega\leq1$ in two-qubit quantum systems. In Sec.\ref{IV} and Sec.\ref{V}, we discuss the polygamy and monogamy properties in terms of $G_\omega$CoA and S$G_\omega$C. Moreover, based on the monogamy properties of S$G_\omega$C,
we develop a set of multipartite entanglement indicators for $N$-qubit states, which are shown to work for the cases for which the usual concurrence-based indicators do not work. In addition, for multilevel systems, it is illustrated that the S$G_\omega$C can be monogamous even if the squared concurrence is polygamous. The main conclusions are summarized in Sec.\ref{VI}.

\section{$G_\omega$-Concurrence}\label{s1}

In this section, we present a category of parameterized bipartite entanglement measures $\mathfrak{C}_\omega$ known as $G_\omega$-Concurrence $G_\omega$C. For any bipartite pure state $|\varphi\rangle_{AB}$, the $G_\omega$C of $|\varphi\rangle_{AB}$ is defined as
\begin{equation}\label{2.1}
\begin{array}{rl}
\mathfrak{C}_\omega(|\varphi\rangle_{AB})=[{\rm tr}(\rho_{A}^\omega)-1]^\omega,
\end{array}
\end{equation}
where $0<\omega\leq1$, $\rho_{A}={\rm tr}_B(|\varphi\rangle_{AB}\langle\varphi|)$ is reduced
density obtained by tracing over the subsystem $B$.

For a pure state $|\varphi\rangle_{AB}$ in Schmidt decomposition form,
\begin{equation}\label{2.2}
\begin{array}{rl}
|\varphi\rangle_{AB}=\sum_{i=1}\sqrt{\chi_i}|i_A\rangle|i_B\rangle,
\end{array}
\end{equation}
where $\chi_i$ are non-negative real numbers with $\sum_i\chi_i=1$, one has
\begin{equation}\label{2.3}
\begin{array}{rl}
\mathfrak{C}_\omega(|\varphi\rangle_{AB})=(\sum_{i=1}\chi_i^\omega-1)^\omega.
\end{array}
\end{equation}

For a bipartite mixed state $\rho_{AB}$, the $G_\omega$C is given via the convex-roof extension,
\begin{equation}\label{2.4}
\begin{array}{rl}
\mathfrak{C}_{\omega}(\rho_{AB})=\min\limits_{\{p_i,|\varphi_{i}\rangle\}}\sum\limits_{i}p_i\mathfrak{C}_{\omega}(|\varphi_i\rangle),
\end{array}
\end{equation}
where the minimum is taken over all possible pure state decompositions of $\rho_{AB}=\sum_{i}p_i|\varphi_{i}\rangle \langle\varphi_{i}|$.
The $G_\omega$CoA, a dual quantity of $G_\omega$C, is given by
\begin{equation}\label{2.5}
\begin{array}{rl}
\mathfrak{C}_{\omega}^a(\rho_{AB})
=\max\limits_{\{p_i,|\varphi_{i}\rangle\}}\sum\limits_{i}
p_i\mathfrak{C}_{\omega}(|\varphi_i\rangle),
\end{array}
\end{equation}
where the maximum runs over all possible pure state decompositions of $\rho_{AB}=\sum_{i}p_i|\varphi_{i}\rangle \langle\varphi_{i}|$.
If $\rho_{AB}$ is a pure state, then $\mathfrak{C}_{\omega}^a(\rho_{AB})=\mathfrak{C}_{\omega}(\rho_{AB})$.

A bone fide entanglement measure should satisfy certain conditions \cite{Horodecki8652009}: (i) Faithfulness; (ii) Invariance under any local unitary transformation; (iii) Monotonicity under local operation and classical communication (LOCC). We show that $G_\omega$C not only meets the conditions (i)-(iii), but also the following properties, (iv) Entanglement monotone \cite{Vidal3552000} (or strong monotonicity under LOCC); (v) Convexity.

{\it i)} We prove now that $\mathfrak{C}_\omega(\rho_{AB})\geq0$ for $0<\omega\leq1$, and $\mathfrak{C}_\omega(\rho_{AB})=0$ if and only if $\rho_{AB}$ is a separable state.

It is obvious that $\mathfrak{C}_{\omega}(\rho_{AB})\geq0$ since $1\leq{\rm tr}(\rho_A^\omega)$ for $0<\omega\leq1$.

$\Leftarrow$: If a bipartite pure state $|\varphi\rangle_{AB}$ is separable, then ${\rm tr}(\rho_A^\omega)=1$, which results in $\mathfrak{C}_\omega(|\varphi\rangle_{AB})=0$.
For any separable mixed state $\rho_{AB}$, assume $\{p_i,|\varphi_i\rangle_{AB}\}$ is the separable pure state decomposition of $\rho_{AB}$. We obtain $\mathfrak{C}_\omega(\rho_{AB})\leq\sum_ip_i\mathfrak{C}_\omega(|\varphi_i\rangle_{AB})=0$. Due to $\mathfrak{C}_\omega(\rho_{AB})\geq0$, we get $\mathfrak{C}_\omega(\rho_{AB})=0$.

$\Rightarrow$: For a bipartite pure state $|\varphi\rangle_{AB}=\sum_i\sqrt{\chi_i}|i_A\rangle|i_B\rangle$,
we have the reduced density operator $\rho_A=\sum_i\chi_i|i_A\rangle\langle i_A|$.
If $\mathfrak{C}_\omega(|\varphi\rangle_{AB})=0$ for $0<\omega\leq1$, according to Eq.~(\ref{2.3}) and $0\leq\chi_i\leq1$, the Schmidt number of $|\varphi\rangle_{AB}$ must be one. Hence $|\varphi\rangle_{AB}$ is separable.
For any mixed state $\rho_{AB}$, if $\mathfrak{C}_\omega(\rho_{AB})=0$, by the definition of $G_\omega$C we have $\mathfrak{C}_\omega(|\varphi_i\rangle_{AB})=0$ for any $i$, which implies that each $|\varphi_i\rangle_{AB}$ is separable, and thus $\rho_{AB}$ is separable.

{\it ii)} Concerning that an entanglement measure should not increase under LOCC \cite{Van0605042013}, we first introduce two lemmas.

\begin{lemma}\label{l3}
\cite{Mintert2072005}
A state $|\psi\rangle$ can be derived from the state $|\varphi\rangle$ utilizing only LOCC $\Leftrightarrow$ the vector $\vec{\chi}_\psi$ majorizes $\vec{\chi}_\varphi$ ($\vec{\chi}_\varphi\prec\vec{\chi}_\psi$), where $\vec{\chi}_\varphi$ ($\vec{\chi}_\psi$) is the Schmidt vector given by the squared Schmidt coefficients of the state $|\psi\rangle$ ($|\varphi\rangle$) arrange in nonincreasing order.
\end{lemma}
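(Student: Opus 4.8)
This is Nielsen's majorization criterion for deterministic LOCC interconversion of pure bipartite states; I would prove the two implications separately, taking for granted the standard facts that $\vec x\prec\vec y$ is equivalent to $\vec x=D\vec y$ for some doubly stochastic $D$ (Hardy--Littlewood--P\'olya), that every doubly stochastic matrix is a convex combination of permutation matrices (Birkhoff), and that $\sum_{i\le r}\lambda_i(X+Y)\le\sum_{i\le r}\lambda_i(X)+\sum_{i\le r}\lambda_i(Y)$ for Hermitian $X,Y$ (Ky Fan). For a pure state $|\chi\rangle_{AB}$ I write $\rho_A^{\chi}$ for the reduced state, $\vec\chi_\chi$ for its ordered eigenvalue vector (equivalently, the squared Schmidt coefficients of $|\chi\rangle$ in nonincreasing order), and $d$ for the dimension.

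\textbf{Necessity.} Assume $|\varphi\rangle$ can be taken to $|\psi\rangle$ by LOCC. I would first reduce to one elementary step --- a measurement by a single party, say Alice, with Kraus operators $\{A_k\}$ obeying $\sum_k A_k^\dagger A_k=I$, producing branch $|\psi_k\rangle$ with probability $p_k$. Putting $M_k=A_k(\rho_A^{\varphi})^{1/2}$, completeness gives $\rho_A^{\varphi}=\sum_k M_k^\dagger M_k$, while tracing out $B$ gives $M_kM_k^\dagger=A_k\rho_A^{\varphi}A_k^\dagger=p_k\rho_A^{\psi_k}$; since $M_k^\dagger M_k$ and $M_kM_k^\dagger$ have equal nonzero spectra, Ky Fan's inequality yields $\sum_{i\le r}(\vec\chi_\varphi)_i\le\sum_k p_k\sum_{i\le r}(\vec\chi_{\psi_k})_i$ for every $r$. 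By the symmetry of the Schmidt decomposition the same bound holds when Bob measures. Iterating it along the (finite) tree of an LOCC protocol, all of whose leaves are copies of $|\psi\rangle$, gives $\sum_{i\le r}(\vec\chi_\varphi)_i\le\sum_{i\le r}(\vec\chi_\psi)_i$ for all $r$, with equality at $r=d$ because both are probability vectors; this is precisely $\vec\chi_\varphi\prec\vec\chi_\psi$.

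\textbf{Sufficiency.} Conversely, suppose $\vec\chi_\varphi\prec\vec\chi_\psi$; I would build an explicit one-way protocol. Write $\vec\chi_\varphi=D\vec\chi_\psi$ and Birkhoff-decompose the relevant doubly stochastic matrix as a convex combination of permutations $\Pi_k$ with weights $p_k$. Noting that majorization forces $\mathrm{rank}\,\rho_A^{\psi}\le\mathrm{rank}\,\rho_A^{\varphi}$, work in the Schmidt bases and set, on the support of $\rho_A^{\varphi}$ (padding the collection with a zero-probability outcome if $\rho_A^{\varphi}$ is not of full rank), $A_k=\sqrt{p_k}\,(\rho_A^{\psi})^{1/2}\Pi_k(\rho_A^{\varphi})^{-1/2}$. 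A short computation gives $\sum_k A_k^\dagger A_k=I$ and $A_k\rho_A^{\varphi}A_k^\dagger=p_k\rho_A^{\psi}$, so every post-measurement state $(A_k\otimes I)|\varphi\rangle/\sqrt{p_k}$ has reduced state $\rho_A^{\psi}$ and therefore equals $|\psi\rangle$ up to a local unitary $V_k\otimes U_k$. Hence the protocol ``Alice measures $\{A_k\}$, broadcasts the outcome $k$, and both parties apply $V_k\otimes U_k$'' deterministically converts $|\varphi\rangle$ into $|\psi\rangle$.

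\textbf{Where the difficulty lies.} The two operator identities and the Ky Fan bookkeeping are routine. The genuinely delicate point is the reduction in the necessity direction: a general LOCC protocol permits unboundedly many rounds of two-way communication, so one must justify that tracking the functionals $|\chi\rangle\mapsto\sum_{i\le r}(\vec\chi_\chi)_i$ step by step captures everything --- this works because each such functional can only increase in expectation under either party's local measurement (the Lo--Popescu fact that pure-state LOCC collapses to one-way LOCC is essentially the same phenomenon). A minor secondary issue, absorbed by restricting to the support of $\rho_A^{\varphi}$ and padding the POVM, is the possible non-invertibility of the reduced states in the sufficiency construction.
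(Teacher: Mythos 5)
The paper offers no proof of this lemma: it is imported verbatim (as Lemma~1, with a citation to the review of Mintert \emph{et al.}) and used as a black box in establishing LOCC monotonicity of $\mathfrak{C}_\omega$, so there is no in-paper argument to compare yours against. What you have written is the standard proof of Nielsen's majorization theorem, and it is correct: the necessity direction via $M_k=A_k(\rho_A^{\varphi})^{1/2}$, the equality of nonzero spectra of $M_k^\dagger M_k$ and $M_kM_k^\dagger$, and Ky Fan subadditivity of the partial eigenvalue sums is exactly the right bookkeeping, and the sufficiency direction via Hardy--Littlewood--P\'olya, Birkhoff, and the measurement $A_k=\sqrt{p_k}\,(\rho_A^{\psi})^{1/2}\Pi_k(\rho_A^{\varphi})^{-1/2}$ is the standard explicit one-way protocol. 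You also correctly flag the two genuinely delicate points (reduction of a many-round LOCC tree to the monotonicity of the Ky Fan functionals, and the rank/support issue when $\rho_A^{\varphi}$ is singular) rather than sweeping them under the rug. One cosmetic remark: the paper's statement of the lemma swaps the roles of $|\psi\rangle$ and $|\varphi\rangle$ in the final parenthetical clause; you have read it in the only sensible way (each $\vec\chi$ belongs to its own state), which is what the paper actually uses.
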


\begin{lemma}\label{l4}
\cite{Ando1631989}
A function $F$ is monotone for pure states $\Leftrightarrow$ $F$ is Schur concave with respect to the spectrum of the subsystem, which is equivalent to the following two conditions:
(a) $F$ is invariant under any permutation of two arguments;
(b) For any two components $\chi_i$ and $\chi_j$ of $\vec{\chi}$, the inequality $(\chi_i-\chi_j)(\frac{\partial F}{\partial\chi_i}-\frac{\partial F}{\partial\chi_j})\leq0$ holds.
\end{lemma}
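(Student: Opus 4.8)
The statement is really two chained equivalences -- ``monotone on pure states'' $\Leftrightarrow$ ``Schur concave in the subsystem spectrum'' $\Leftrightarrow$ ``(a) and (b)'' -- and the plan is to establish them in that order, the first using Lemma~\ref{l3} and the second using the classical Schur--Ostrowski criterion.

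For the first equivalence I would start by fixing a normalization: since two pure states may have different Schmidt ranks, I pad the squared-Schmidt vectors with zeros and treat $F$ throughout as a function on the probability simplex, with arguments ordered nonincreasingly. ``$F$ monotone for pure states'' means $F$ does not increase under LOCC, i.e.\ $F(|\psi\rangle)\le F(|\varphi\rangle)$ whenever $|\psi\rangle$ is LOCC-reachable from $|\varphi\rangle$. By Lemma~\ref{l3} this reachability is precisely $\vec{\chi}_\varphi\prec\vec{\chi}_\psi$, so the monotonicity condition reads: $\vec{x}\prec\vec{y}\Rightarrow F(\vec{y})\le F(\vec{x})$ for all probability vectors $\vec{x},\vec{y}$. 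That is exactly the definition of $F$ being Schur concave with respect to the spectrum, which gives the first equivalence.

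For the second equivalence I would invoke the Schur--Ostrowski criterion. Condition (a) is the permutation symmetry built into the definition of a Schur-concave function; it also follows directly because transposing two Schmidt coefficients leaves the majorization preorder unchanged in both directions, forcing $F$ to be permutation invariant. For condition (b): given $\vec{\chi}$ with $\chi_i>\chi_j$, consider the elementary averaging transfer $\vec{\chi}(\eps)$ obtained by $\chi_i\mapsto\chi_i-\eps$, $\chi_j\mapsto\chi_j+\eps$ with the other entries fixed; for small $\eps\ge0$ one has $\vec{\chi}(\eps)\prec\vec{\chi}$, so Schur concavity yields $F(\vec{\chi}(\eps))\ge F(\vec{\chi})$, hence $\left.\frac{d}{d\eps}F(\vec{\chi}(\eps))\right|_{\eps=0^+}=-\bigl(\frac{\partial F}{\partial\chi_i}-\frac{\partial F}{\partial\chi_j}\bigr)\ge0$, which is (b) after multiplying by $\chi_i-\chi_j>0$ (the case $\chi_i=\chi_j$ being trivial). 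For the converse, assuming (a) and (b), I would use the Hardy--Littlewood--P\'olya fact that $\vec{x}\prec\vec{y}$ implies $\vec{x}$ is reached from $\vec{y}$ by a finite chain of such two-coordinate averaging transfers, and integrate $F$ along that chain: on each transfer $\frac{d}{dt}F=\bigl(\frac{\partial F}{\partial\chi_i}-\frac{\partial F}{\partial\chi_j}\bigr)\frac{d\chi_i}{dt}\ge0$ by (b), since $\frac{d\chi_i}{dt}$ and $\chi_i-\chi_j$ have opposite signs along the path; thus $F$ is nondecreasing from $\vec{y}$ to $\vec{x}$, i.e.\ $F(\vec{x})\ge F(\vec{y})$, which is Schur concavity.

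I expect the real effort to be in the bookkeeping at the two junctions rather than in any single estimate. First, comparing pure states of unequal Schmidt rank forces the zero-padding convention above together with a mild regularity assumption on $F$ up to the boundary of the simplex (where some $\chi_i$ vanish), so that the partial derivatives in (b) are meaningful there. Second, the Hardy--Littlewood--P\'olya reduction of an arbitrary majorization relation to finitely many elementary transfers, and the check that $F$ remains differentiable along the chosen piecewise-linear path so that the fundamental theorem of calculus applies, is the technical core of the converse direction. Once these points are settled, conditions (a) and (b) come out of the Schur--Ostrowski criterion without further computation.
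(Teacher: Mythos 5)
Your proposal is correct in substance, but note that the paper does not actually prove this statement: Lemma~\ref{l4} is imported verbatim from the cited reference (Ando, \emph{Linear Algebra Appl.} \textbf{118}, 163 (1989)) and used as a black box in the proof of property (ii), so there is no in-paper argument to compare against. Your reconstruction --- chaining Lemma~\ref{l3} (LOCC convertibility $\Leftrightarrow$ majorization of the squared Schmidt vectors) with the Schur--Ostrowski criterion, proving the forward direction of (b) by differentiating along an elementary Robin Hood transfer and the converse by integrating along a Hardy--Littlewood--P\'olya chain of such transfers --- is the standard route and each step checks out: the sign bookkeeping in $\left.\frac{d}{d\eps}F(\vec{\chi}(\eps))\right|_{\eps=0^+}=-\bigl(\frac{\partial F}{\partial\chi_i}-\frac{\partial F}{\partial\chi_j}\bigr)\ge 0$ is right, and the observation that $\frac{d\chi_i}{dt}$ and $\chi_i-\chi_j$ have opposite signs along each transfer correctly yields $\frac{d}{dt}F\ge 0$. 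The two caveats you flag (zero-padding for unequal Schmidt ranks, and $C^1$ regularity up to the boundary of the simplex so that the partial derivatives in (b) and the fundamental theorem of calculus apply) are genuinely the only delicate points; they are tacitly assumed in the lemma as stated and are satisfied by the specific function $\mathfrak{C}_\omega$ to which the paper applies it on the interior of the simplex. What your approach buys over the paper's citation is a self-contained justification; what the citation buys is brevity and the full generality of Ando's treatment without having to police these regularity hypotheses explicitly.
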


We utilize the Lemmas \ref{l3} and \ref{l4} to prove the property (ii).
For a bipartite pure state $|\varphi\rangle$, we have $\mathfrak{C}_\omega(|\varphi\rangle)=(\sum_{i=1}^{d}\chi_i^\omega-1)^{\omega}$ with $\chi_1\geq\chi_2\geq\cdots\geq\chi_d$. It is easy to get that $\mathfrak{C}_\omega(|\varphi\rangle)$ is invariant under any permutation of the arguments $\chi_i$ and $\chi_j$ in the vector $ \vec{\chi}_\varphi$, and
\begin{align*}
&(\chi_i-\chi_j)(\frac{\partial \mathfrak{C}_\omega}{\partial\chi_i}-\frac{\partial \mathfrak{C}_\omega}{\partial\chi_j})\\
=&(\chi_i-\chi_j)[\omega^2(\sum\limits_{l=1}^{m}\chi_l^\omega-1)^{\omega-1}
\chi_i^{\omega-1}\\
&
-(\sum\limits_{l=1}^{m}\chi_l^\omega-1)^{\omega-1}\chi_j^{\omega-1}]\\
=&\omega^2(\chi_i-\chi_j)(\sum\limits_{l=1}^{m}\chi_l^\omega-1)^{\omega-1}
(\chi_i^{\omega-1}-\chi_j^{\omega-1})\leq0.
\end{align*}
Therefore, $\mathfrak{C}_\omega[\Lambda_{\rm LOCC}(|\varphi\rangle)]\leq\mathfrak{C}_\omega(|\varphi\rangle)$.

For any mixed state $\rho_{AB}$, let $\{p_i,|\varphi_i\rangle\}$ be the optimal pure state decomposition of $\mathfrak{C}_\omega(\rho_{AB})$. We have
\begin{align*}
\mathfrak{C}_\omega(\rho_{AB})=\sum\limits_{i}p_i\mathfrak{C}_{\omega}(|\varphi_i\rangle)
&\geq\sum\limits_{i}p_i\mathfrak{C}_{\omega}[\Lambda_{\rm LOCC}(|\varphi_i\rangle)]\\
&\geq \mathfrak{C}_\omega[\Lambda_{\rm LOCC}(\rho_{AB})],
\end{align*}
where the last inequality is due to the definition (\ref{2.4}).

{\it iii)} The local unitary transformations are kinds of invertible LOCC operations  \cite{Horodecki8652009}. From property (ii) we immediately obtain that $\mathfrak{C}_\omega(\rho_{AB})$ is invariant under any local unitary transformation, $\mathfrak{C}_\omega(\rho_{AB})=\mathfrak{C}_\omega(U_A\otimes U_B\rho_{AB}U_A^\dagger\otimes U_B^\dagger)$.

{\it iv)} For any given states $\rho_{1}$ and $\rho_{2}$ we have
\begin{equation}
\begin{array}{rl}
{G}_\omega(\chi_{1}\rho_{1}+\chi_{2}\rho_{2})
=&[{\rm tr}(\chi_{1}\rho_{1}+\chi_{2}\rho_{2})^\omega-1]^{\omega}\\
\geq&\big\{[\chi_{1}({\rm tr}\rho_{1}^\omega)^{\frac{1}{\omega}}+\chi_{2}({\rm tr}\rho_{2}^\omega)^{\frac{1}{\omega}}]^\omega-1\big\}^\omega\\
\geq&\big[(\chi_{1}{\rm tr}\rho_{1}^\omega+\chi_{2}{\rm tr}\rho_{2}^\omega)-1\big]^{\omega}\\
=&\big[\chi_{1}({\rm tr}\rho_{1}^\omega-1)+\chi_{2}({\rm tr}\rho_{2}^\omega-1)\big]^{\omega}\\
\geq&\chi_{1}({\rm tr}\rho_{1}^\omega-1)^{\omega}+\chi_{2}({\rm tr}\rho_{2}^\omega-1)^{\omega}\\
=&\chi_{1}{G}_\omega(\rho_{1})+\chi_{2}{G}_\omega(\rho_{2}),
\end{array}
\end{equation}
where the first inequality is derived from Minkowski's inequality, $[{\rm tr}(\rho_{1}+\rho_{2})^\omega]^{\frac{1}{\omega}}\geq({\rm tr}\rho_{1}^\omega)^{\frac{1}{\omega}}+{\rm tr}\rho_{2}^\omega)^{\frac{1}{\omega}}$ for $0<\omega\leq1$, the second and third inequalities hold because of the concavity of $f(x)=x^s$ for $0<s<1$.
Therefore, ${G}_\omega(\rho)=({\rm tr}\rho^\omega-1)^\omega$
is concavity for any density operator $\rho$ for $0<\omega\leq1$.

In Ref.~\cite{Vidal3552000}, the author demonstrated that an entanglement quantifier $F$ adheres to strong monotonicity if it fulfills: (c) $m(U\rho_AU^\dagger)=m(\rho_A)$, where $m(\rho_A)=F(|\varphi\rangle_{AB})$ is concave with $\rho_A={\rm tr}_B(|\varphi\rangle\langle\varphi|)$; (d) $F$ is defined by convex roof extension for any mixed states. It is obvious that $\mathfrak{C}_{\omega}(\rho_{AB})$ meets these conditions from (iii) and the definition of $\mathfrak{C}_{\omega}(\rho_{AB})$.
Therefore, $G_\omega$C is an entanglement monotone,
$\mathfrak{C}_\omega(\rho_{AB})\geq\sum_ip_i\mathfrak{C}_\omega(\sigma_i)$,
given the ensemble $\{p_i,\rho_i\}$ obtained from the $\Lambda_{\rm LOCC}$ operation on $\rho_{AB}$.

{\it v)} Based on the definition of $\mathfrak{C}_\omega(\rho_{AB})$, $G_\omega$C is convex on quantum state $\rho_{AB}$, namely, $\mathfrak{C}_\omega(\rho_{AB})\leq\sum_ip_i\mathfrak{C}_\omega(\rho^i_{AB})$, where $\rho_{AB}=\sum_ip_i\rho^i_{AB}$ with $\sum_ip_i=1$ and $p_i>0$.

\section{$G_\omega$C and concurrence}\label{III}
The concurrence plays a crucial role in quantifying entanglement \cite{Gour2605012004}. We consider the relations between $G_\omega$C and the concurrence.
Consider a two-qubit pure state in Schmidt form, $|\varphi\rangle_{AB}=\sum_{i=1}^2\sqrt{\chi_i}|ii\rangle_{AB}$. The $G_\omega$C for this state is given by
$\mathfrak{C}_\omega(|\varphi\rangle_{AB})=(\chi_1^\omega+\chi_2^\omega-1)^\omega$.
The concurrence is given by $C(|\varphi\rangle_{AB})=2\sqrt{\chi_1\chi_2}$.
It is direct to demonstrate that
\begin{equation}\label{11}
\begin{array}{rl}
\mathfrak{C}_\omega(|\varphi\rangle_{AB})=h_\omega[C(|\varphi\rangle_{AB})],\\
\end{array}
\end{equation}
where $h_\omega(\theta)$ is given by
\begin{equation*}
\begin{array}{rl}
h_\omega(\theta)=[(\frac{1+\sqrt{1-\theta^2}}{2})^\omega
+(\frac{1-\sqrt{1-\theta^2}}{2})^\omega-1]^\omega
\end{array}
\end{equation*}
for $0\leq \theta\leq1$.

\begin{theorem}\label{t1}
For genera two-qubit mixed states $\rho_{AB}$, we have the following conclusion,
\begin{equation}\label{2.7}
\begin{array}{rl}
\mathfrak{C}_\omega(\rho_{AB})=h_\omega[C(\rho_{AB})]
\end{array}
\end{equation}
for $0.85798\leq\omega\leq1$.
\end{theorem}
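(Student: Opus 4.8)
The plan is to lift the pure-state identity (\ref{11}) to mixed states by the standard Wootters-type convex-roof argument, which applies whenever the function relating the measure to the concurrence is continuous, monotonically increasing and convex on $[0,1]$. Thus the proof splits into (a) a structural step that uses this monotone-convex property of $h_\omega$ together with the known structure of optimal two-qubit decompositions for the concurrence, and (b) an analytic step verifying that $h_\omega$ actually enjoys this property precisely for $0.85798\le\omega\le1$. Step (b) is where the numerical threshold enters and is the main obstacle.

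For the lower bound, take any pure-state decomposition $\rho_{AB}=\sum_i p_i|\psi_i\rangle\langle\psi_i|$. Using (\ref{11}), then Jensen's inequality for the convex function $h_\omega$, then monotonicity of $h_\omega$ together with the convex-roof definition $C(\rho_{AB})=\min\sum_i p_i C(|\psi_i\rangle)$, one gets $\sum_i p_i\mathfrak{C}_\omega(|\psi_i\rangle)=\sum_i p_i h_\omega(C(|\psi_i\rangle))\ge h_\omega\!\big(\sum_i p_i C(|\psi_i\rangle)\big)\ge h_\omega(C(\rho_{AB}))$; taking the infimum over decompositions gives $\mathfrak{C}_\omega(\rho_{AB})\ge h_\omega(C(\rho_{AB}))$. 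For the upper bound I invoke the classical result of Wootters that every two-qubit $\rho_{AB}$ admits an optimal decomposition $\{q_j,|\phi_j\rangle\}$ for the concurrence with $C(|\phi_j\rangle)=C(\rho_{AB})$ for all $j$; evaluating $\mathfrak{C}_\omega$ on it gives $\mathfrak{C}_\omega(\rho_{AB})\le\sum_j q_j\,\mathfrak{C}_\omega(|\phi_j\rangle)=\sum_j q_j\,h_\omega(C(\rho_{AB}))=h_\omega(C(\rho_{AB}))$. The two bounds together give (\ref{2.7}).

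It then remains to establish that $h_\omega$ is nondecreasing on $[0,1]$ for every $0<\omega\le1$, and convex on $[0,1]$ exactly when $\omega_0:=0.85798\ldots\le\omega\le1$. Monotonicity is easy: with $t=\sqrt{1-\theta^2}$ and $g_\omega(t)=(\tfrac{1+t}{2})^\omega+(\tfrac{1-t}{2})^\omega-1$, the map $x\mapsto x^\omega+(1-x)^\omega$ is symmetric and concave about $x=\tfrac12$, so $g_\omega$ decreases from $2^{1-\omega}-1\ge0$ at $t=0$ to $0$ at $t=1$; since $t$ decreases in $\theta$ and $s\mapsto s^\omega$ is increasing, $h_\omega(\theta)=g_\omega(\sqrt{1-\theta^2})^\omega$ is nondecreasing in $\theta$, with $h_\omega(0)=0$ and $h_\omega(1)=(2^{1-\omega}-1)^\omega$. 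Convexity is the crux: differentiating $h_\omega$ twice via the chain rule and clearing the manifestly positive factors reduces $h_\omega''(\theta)\ge0$ to the nonnegativity on $(0,1)$ of a single elementary (but cumbersome) expression in $\theta$ and $\omega$.

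I expect this convexity analysis to be the real work. A boundary check shows $h_\omega$ is automatically convex near $\theta=1$ for all $\omega<1$, while near $\theta=0$ one finds $h_\omega(\theta)\sim c\,\theta^{2\omega^2}$, so convexity there alone would only require $\omega\ge1/\sqrt2$; hence the binding condition comes from an interior neighbourhood of $[0,1]$, and tracking the sign of the reduced expression there (numerically near the boundary case) pins the threshold at $\omega_0\approx0.85798$, with $h_\omega$ failing to be convex for $\omega<\omega_0$. Once $h_\omega$ is shown to be continuous, nondecreasing and convex on $[0,1]$ for $\omega\ge\omega_0$, the structural argument of the two previous paragraphs closes the proof; everything except identifying $\omega_0$ is routine convex-roof manipulation.
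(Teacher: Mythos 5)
Your structural argument coincides with the paper's: the lower bound $\mathfrak{C}_\omega(\rho_{AB})\ge h_\omega[C(\rho_{AB})]$ via the pure-state identity, Jensen's inequality for the convex $h_\omega$, and monotonicity plus the convex-roof definition of $C$; the upper bound via Wootters' optimal decomposition with equal pure-state concurrences. That part is correct, and your monotonicity argument for $h_\omega$ (via concavity and symmetry of $x\mapsto x^\omega+(1-x)^\omega$ about $x=\tfrac12$) is if anything cleaner than the paper's derivative computation.

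The gap is in the convexity step, which you correctly single out as the crux. Your claim that a boundary check shows $h_\omega$ is \emph{automatically} convex near $\theta=1$ for all $\omega<1$, so that the binding condition must come from an interior neighbourhood, is false. Write $h_\omega=g^\omega$ with $g(\theta)=\big(\tfrac{1+\sqrt{1-\theta^2}}{2}\big)^\omega+\big(\tfrac{1-\sqrt{1-\theta^2}}{2}\big)^\omega-1$. Then $h_\omega''=\omega(\omega-1)g^{\omega-2}(g')^2+\omega g^{\omega-1}g''$, and at $\theta=1$ one has $g(1)=2^{1-\omega}-1>0$ and $g'(1)=2^{1-\omega}\omega(1-\omega)>0$ (use $g\approx 2^{1-\omega}-1+2^{-\omega}\omega(\omega-1)(1-\theta^2)$), so the first, strictly negative term survives in the limit and competes with the $g''$ term; the sign of $\lim_{\theta\to1}h_\omega''$ is genuinely $\omega$-dependent. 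Indeed, the paper's threshold $0.85798$ is precisely the root of $\lim_{\theta\to1}\frac{d^2h_\omega}{d\theta^2}=0$: the binding constraint sits exactly at the endpoint you dismiss, and the limit is negative for $\omega$ below the threshold. (Your $\theta\to0$ asymptotics $h_\omega\sim c\,\theta^{2\omega^2}$ and the weaker constraint $\omega\ge1/\sqrt2$ from that end are correct but not binding.) Consequently your plan of pinning $\omega_0$ by tracking signs in the interior would not find the right number; you must analyse the $\theta\to1$ limit. Beyond this, note that neither your sketch nor the paper (which appeals to plotted surfaces) gives a fully rigorous proof that $h_\omega''\ge0$ on all of $(0,1)$ for $\omega\ge\omega_0$; but the misplaced boundary analysis is the concrete error to repair before the argument can be completed.
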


\begin{proof}
We first prove that the function $h_\omega(\theta)$ is monotonically increasing with $\theta$ for $0<\omega\leq1$, namely, the following first-order derivative is nonnegative,
\begin{equation*}
\begin{array}{rl}
\frac{dh_\omega(\theta)}{d\theta}=&\frac{\omega^{2}}{2^\omega}
[(\frac{1+\sqrt{1-\theta^2}}{2})^\omega
+(\frac{1-\sqrt{1-\theta^2}}{2})^\omega-1]^{\omega-1}\\[1mm]
&\times\frac{\theta[(1-\sqrt{1-\theta^2})^{\omega-1}-(1+\sqrt{1-\theta^2})^{\omega-1}]}
{\sqrt{1-\theta^2}}.
\end{array}
\end{equation*}
Obviously, $\frac{dh_\omega(\theta)}{d\theta}\geq0$ for $0\leq\theta\leq1$ and $0<\omega\leq1$. Thus, $h_\omega(\theta)$ monotonically increases for $0\leq\theta\leq1$ as $h_\omega(\theta)$ is continuous. The equality holds only at the boundary of $\theta$.

Next we prove that $h_\omega(\theta)$ is convex with $\theta$ for $0.85798\leq\omega\leq1$, namely, the following second derivative of $h_\omega(\theta)$ is nonnegative, \begin{equation*}
\begin{array}{rl}
\mathrm{M}(\theta,\omega)\equiv\frac{d^2h_\omega(\theta)}{d\theta^2}
=\mathrm{A_{1}}\big[\mathrm{A_{2}}+\mathrm{A_{3}}(\mathrm{A_{4}}-\mathrm{A_{5}})\big],
\end{array}
\end{equation*}
where
\begin{equation*}
\begin{array}{rl}
&A_{1}=\frac{\omega^{2}}{2^\omega}[(\frac{1+\sqrt{1-\theta^2}}{2})^\omega
+(\frac{1-\sqrt{1-\theta^2}}{2})^\omega-1]^{\omega-2},\\[1mm]
&A_{2}=\frac{\omega(\omega-1)}{2^\omega}(\frac{\theta[(1-\sqrt{1-\theta^2})^{\omega-1}
-(1+\sqrt{1-\theta^2})^{\theta-1}]}{\sqrt{1-\theta^2}})^2,\\
&A_{3}=(\frac{1+\sqrt{1-\theta^2}}{2})^\omega+(\frac{1-\sqrt{1-\theta^2}}{2})^\omega-1,\\[1mm]
&A_{4}=\frac{(1-\sqrt{1-\theta^2})^{\omega-2}}{1-\theta^2}[\frac{1-\sqrt{1-\theta^2}}
{\sqrt{1-\theta^2}}+\theta^2(\omega-1)],\\[1mm]
&A_{5}=\frac{(1+\sqrt{1-\theta^2})^{\omega-2}}{1-\theta^2}[\frac{1+\sqrt{1-\theta^2}}
{\sqrt{1-\theta^2}}-\theta^2(\omega-1)].
\end{array}
\end{equation*}
We have
\begin{equation*}
\begin{array}{rl}
&\lim\limits_{\theta\rightarrow1}\mathrm{M}(\theta,\omega)\\
&=\frac{\omega^{2}}{3\cdot2^{2\omega}}\big(-1
+(\frac{1}{2})^{\omega-1}\big)\big(12\omega(\omega-1)^{3}-2(2^{\omega}-2)\\[1mm]
&~~~~(\omega-1)(\omega^{2}-5\omega+3)\big).
\end{array}
\end{equation*}

As shown in Fig.~\ref{Fig 1}, we see that there may exist a critical point $\omega_{\theta}\approx0.85798$ corresponding to
$\lim\limits_{\theta\rightarrow1}\mathrm{M}(\theta,\omega)=0$.
The second derivative is nonnegative in this region $\omega_{\theta}\leq\omega\leq1$, as shown in Fig.~\ref{Fig 2}.
\begin{figure}[htbp]
\centering
{\includegraphics[width=8cm,height=6cm]{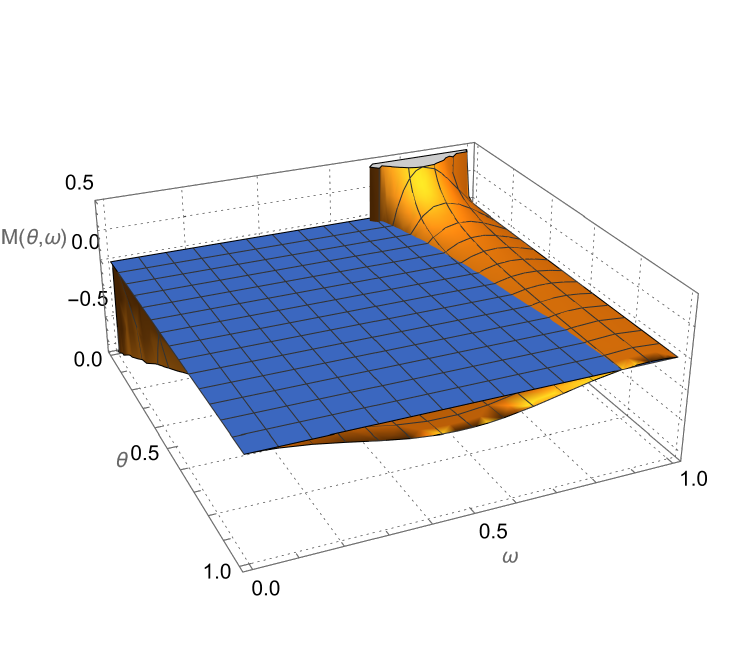}}
\caption{The orange surface represents the $\mathrm{M}(\theta,\omega)$. The blue surface represents the $\mathrm{M}(\theta,\omega)=0$.}\label{Fig 1}
\end{figure}
\begin{figure}[htbp]
\centering
{\includegraphics[width=6cm,height=4.5cm]{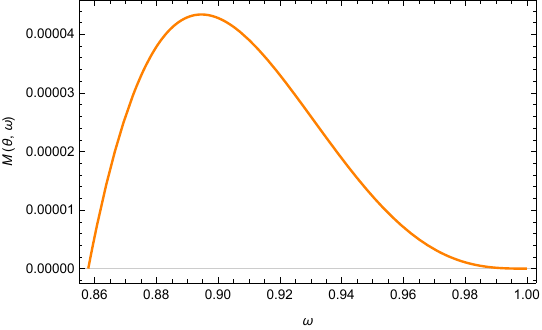}}
\caption{ The orange thick line represents the $\lim\limits_{\theta\rightarrow1}\mathrm{M}(\theta,\omega)$ for $\omega\in [\omega_{\theta},1]$.}\label{Fig 2}
\end{figure}

Based on above conclusions, we assume that $\{p_i,|\varphi_i\rangle\}$ is the optimal pure state decomposition of $\mathfrak{C}_\omega(\rho_{AB})$. We have
\begin{equation*}
\begin{array}{rl}
\mathfrak{C}_\omega(\rho_{AB})&=\sum_ip_i\mathfrak{C}_\omega(|\varphi_i\rangle)\\[1mm]
&=\sum_ip_ih_\omega[C(|\varphi_i\rangle)]\\[1mm]
&\geq h_\omega[\sum_ip_iC(|\varphi_i\rangle)]\\[1mm]
&\geq h_\omega[C(\rho_{AB})],
\end{array}
\end{equation*}
where the first inequality is due to the convexity of $h_\omega(\theta)$ for $0.8598\leq\omega\leq1$ and the second inequality follows from the monotonicity of $h_\omega(\theta)$ for $0<\omega\leq1$.

In \cite{Hill50221997} it is shown that there exists an optimal pure state decomposition $\{p_i, |\varphi_i\rangle\}$ for any
two-qubit mixed state $\rho_{AB}$ such that the concurrence of each pure state is equal. Hence,
\begin{equation*}
\begin{array}{rl}
h_\omega[C(\rho_{AB})]&=h_\omega[\sum_ip_iC(|\varphi_i\rangle)]\\[1mm]
&=\sum_ip_ih_\omega[C(|\varphi_i\rangle)]\\[1mm]
&=\sum_ip_i\mathfrak{C}_\omega(|\varphi_i\rangle)\\[1mm]
&\geq\mathfrak{C}_\omega(\rho_{AB}),
\end{array}
\end{equation*}
where the inequality holds according to the definition of $\mathfrak{C}_\omega(\rho_{AB})$.
Therefore, the Eq.~(\ref{2.7}) is true for any two-qubit mixed state.
\end{proof}

In addition to the relation between $G_\omega$C and concurrence, we have the following similar relation between $G_\omega$CoA and the coherence of assistance (CoA)  $C^a$ for two-qubit states.

\begin{corollary}\label{c1}
For any two-qubit states $\rho_{AB}$, we have
\begin{equation}\label{2.8}
\begin{array}{rl}
\mathfrak{C}_\omega^a(\rho_{AB})\leq h_\omega[C^a(\rho_{AB})]
\end{array}
\end{equation}
for $0.85798\leq\omega\leq1$.

\begin{proof}
Let $\{p_i,|\varphi_i\rangle\}$ be the pure state decomposition of $\rho_{AB}$ such that $C^a(\rho_{AB})=\sum_ip_iC(|\varphi_i\rangle)$. For $0.85798\leq\omega\leq1$, we have
\begin{equation}
\begin{array}{rl}
h_\omega[C^a(\rho_{AB})]&=h_\omega[\sum_ip_iC(|\varphi_i\rangle)]\\
&\leq\sum_ip_ih_\omega[C(|\varphi_i\rangle)]\\
&=\sum_ip_i\mathfrak{C}_\omega(|\varphi_i\rangle)\\
&\leq\mathfrak{C}_\omega^a(\rho_{AB}),
\end{array}
\end{equation}
where the first inequality is due to the convexity of $h_\omega(\theta)$ and the second inequality is assured according to the definition of $\mathfrak{C}_\omega^a(\rho_{AB})$.
\end{proof}
\end{corollary}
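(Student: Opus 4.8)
The plan is to treat Corollary~\ref{c1} as the ``assisted'' counterpart of Theorem~\ref{t1} and to run essentially the same machinery, but on a decomposition that is extremal for a \emph{maximum} rather than for the convex-roof minimum. The three ingredients are all already in hand: the pure-state identity $\mathfrak{C}_\omega(|\varphi\rangle_{AB})=h_\omega[C(|\varphi\rangle_{AB})]$ of Eq.~(\ref{11}); the monotonicity of $h_\omega$ on $[0,1]$ for every $0<\omega\le1$; and the convexity of $h_\omega$ on $[0,1]$ for $0.85798\le\omega\le1$, both of which were established in the proof of Theorem~\ref{t1}.

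First I would fix a pure-state ensemble $\{p_i,|\varphi_i\rangle\}$ of $\rho_{AB}$ that realizes the concurrence of assistance, $C^a(\rho_{AB})=\sum_i p_i\,C(|\varphi_i\rangle)$; such an optimal ensemble exists because the relevant optimization runs over decompositions with a bounded number of terms, a compact set, on which $C$ is continuous. Applying Jensen's inequality to the convex function $h_\omega$ along this ensemble and then rewriting each term through Eq.~(\ref{11}) gives
\begin{align*}
h_\omega\big[C^a(\rho_{AB})\big]&=h_\omega\Big[\sum_i p_i\,C(|\varphi_i\rangle)\Big]\\
&\le\sum_i p_i\,h_\omega\big[C(|\varphi_i\rangle)\big]=\sum_i p_i\,\mathfrak{C}_\omega(|\varphi_i\rangle).
\end{align*}
The right-hand side is the average of $\mathfrak{C}_\omega$ over one specific pure-state decomposition of $\rho_{AB}$, hence it cannot exceed the maximum of that average over all decompositions, which by Eq.~(\ref{2.5}) is precisely $\mathfrak{C}_\omega^a(\rho_{AB})$. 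Chaining the two displays yields the stated comparison between $h_\omega[C^a(\rho_{AB})]$ and $\mathfrak{C}_\omega^a(\rho_{AB})$.

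The step that requires care---and the reason this comes out as an inequality rather than an identity like Eq.~(\ref{2.7})---is the asymmetry between the minimum and the maximum. In Theorem~\ref{t1} equality was obtained by combining the general bound $\mathfrak{C}_\omega(\rho_{AB})\ge h_\omega[C(\rho_{AB})]$ with the Hill--Wootters fact that every two-qubit mixed state admits an optimal convex-roof decomposition whose pure states all share a common concurrence, which makes the corresponding Jensen step an equality; no analogous structural statement is available for a $C^a$-optimal decomposition, so I would not expect to close the gap. Concretely, the subtlety to watch is that the single Jensen step must be applied to the $C^a$-extremal ensemble and not to the $\mathfrak{C}_\omega^a$-extremal one: convexity of $h_\omega$ only yields that the value of $h_\omega$ at an average of concurrences is at most the average of the $h_\omega$-values, and only feeding in the $C^a$-extremal ensemble lets the subsequent passage from that average to the maximum run in the needed direction. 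Apart from that bookkeeping the argument is routine, and the hypothesis $0.85798\le\omega\le1$ enters for exactly one purpose---to guarantee that $h_\omega$ is convex, just as in Theorem~\ref{t1}.
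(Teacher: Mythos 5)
Your argument is correct and is essentially identical to the paper's own proof: both fix a $C^a$-optimal ensemble, apply Jensen's inequality using the convexity of $h_\omega$ for $0.85798\leq\omega\leq1$, convert each term via the pure-state identity $\mathfrak{C}_\omega(|\varphi_i\rangle)=h_\omega[C(|\varphi_i\rangle)]$, and bound the resulting average by the maximum defining $\mathfrak{C}_\omega^a(\rho_{AB})$. Your added remarks on why the statement is only an inequality (the absence of a Hill--Wootters-type equal-concurrence decomposition for the assisted quantity) are accurate but not needed for the proof itself.
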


\section{Polygamy relation}\label{IV}
It has been shown that the squared assistance concurrence obeys the polygamy relation for an $N$-qubit state $|\varphi\rangle_{AB_1\cdots B_{N-1}}$ \cite{Gour0121082007},
\begin{equation}\label{2.9}
\begin{array}{rl}
C^2(|\varphi\rangle_{A|B_1\cdots B_{N-1}})\leq (C^a_{AB_1})^2+\cdots+(C^a_{AB_{N-1}})^2,
\end{array}
\end{equation}
where $C(|\varphi\rangle_{AB_1\cdots B_{N-1}})$ is the concurrence of $|\varphi\rangle_{A|B_1\cdots B_{N-1}}$ under bipartition $A|B_1\cdots B_{N-1}$, and $C^a_{AB_{j-1}}$ is the CoA of the reduced state $\rho_{AB_{j-1}}$, $j=2,\cdots,N$.

To investigate the polygamy property of $G_\omega$CoA defined in Eq.~(\ref{2.5}), consider
\begin{equation*}
\begin{array}{rl}
H_\omega(\theta_{1},\theta_{2})=h_\omega(\sqrt{\theta_{1}^2+\theta_{2}^2})-h_\omega(\theta_{1})-h_\omega(\theta_{2})\\
\end{array}
\end{equation*}
in the region $R=\{(\theta_{1},\theta_{2})|0\leq \theta_{1},\theta_{2},\theta_{1}^2+\theta_{2}^2\leq1\}$ for $0\leq\omega\leq1$.
Since $H_\omega(\theta_{1},\theta_{2})$ is continuous on the bounded closed set $R$, it takes maximum and minimum values for given $\omega$ at critical or boundary points. Consider
\begin{equation*}
\begin{array}{rl}
\nabla H_\omega(\theta_{1},\theta_{2})=\big(\frac{\partial H_\omega(\theta_{1},\theta_{2})}{\partial \theta_{1}},\frac{\partial H_\omega(\theta_{1},\theta_{2})}{\partial \theta_{2}}\big),
\end{array}
\end{equation*}
where
\begin{equation*}
\begin{array}{rl}
&\frac{\partial H_\omega(\theta_{1},\theta_{2})}{\partial \theta_{1}}\\
&=\frac{\theta_{1}\omega^{2}}{2^\omega}\{[(\frac{1-\sqrt{1-\theta_{1}^2-\theta_{2}^2}}{2})^\omega+(\frac{1+\sqrt{1-\theta_{1}^2-\theta_{2}^2}}{2})^\omega-1]^{\omega-1}\\
&\quad\times\frac{(1-\sqrt{1-\theta_{1}^2-\theta_{2}^2})^{\omega-1}-(1+\sqrt{1-\theta_{1}^2-\theta_{2}^2})^{\omega-1}}{\sqrt{1-\theta_{1}^2-\theta_{2}^2}}\\
&\quad-[(\frac{1-\sqrt{1-\theta_{1}^2}}{2})^\omega+(\frac{1+\sqrt{1-\theta_{1}^2}}{2})^\omega-1]^{\omega-1}\\
&\quad\times\frac{(1-\sqrt{1-\theta_{1}^2})^{\omega-1}-(1+\sqrt{1-\theta_{1}^2})^{\omega-1}}{\sqrt{1-\theta_{1}^2}}\},\\

&\frac{\partial H_\omega(\theta_{1},\theta_{2})}{\partial \theta_{2}}\\
&=\frac{\theta_{2}\omega^{2}}{2^\omega}\{[(\frac{1-\sqrt{1-\theta_{1}^2-\theta_{2}^2}}{2})^\omega+(\frac{1+\sqrt{1-\theta_{1}^2-\theta_{2}^2}}{2})^\omega-1]^{\omega-1}\\
&\quad\times\frac{(1-\sqrt{1-\theta_{1}^2-\theta_{2}^2})^{\omega-1}-(1+\sqrt{1-\theta_{1}^2-\theta_{2}^2})^{\omega-1}}{\sqrt{1-\theta_{1}^2-\theta_{2}^2}}\\
&\quad-[(\frac{1-\sqrt{1-\theta_{2}^2}}{2})^\omega+(\frac{1+\sqrt{1-\theta_{2}^2}}{2})^\omega-1]^{\omega-1}\\
&\quad\times\frac{(1-\sqrt{1-\theta_{2}^2})^{\omega-1}-(1+\sqrt{1-\theta_{2}^2})^{\omega-1}}{\sqrt{1-\theta_{2}^2}}\}.
\end{array}
\end{equation*}
Assume that $\nabla H_\omega(\theta_{1}',\theta_{2}')=0$ for some $(\theta_{1}',\theta_{2}')\in\{(\theta_{1},\theta_{2})|0\leq \theta_{1},\theta_{2},\theta_{1}^2+\theta_{2}^2\leq1\}$. Then it is observed that $\nabla H_\omega(\theta_{1}',\theta_{2}')=0$ is equivalent to $f_\omega(\theta_{1}')=f_\omega(\theta_{2}')$, where
\begin{equation*}
\begin{array}{rl}
f_\omega(n)=&[(\frac{1-\sqrt{1-n^2}}{2})^\omega+(\frac{1+\sqrt{1-n^2}}{2})^\omega-1]^{\omega-1}\\
&\times\frac{(1-\sqrt{1-n^2})^{\omega-1}-(1+\sqrt{1-n^2})^{\omega-1}}{\sqrt{1-n^2}}\}.\\
\end{array}
\end{equation*}

Denote $g_{\omega}(n)=\frac{df_\omega(n)}{dn}$. We have
\begin{equation*}
\begin{array}{rl}
g_{\omega}(n)=&\frac{n\omega(\omega-1)}{2^{\omega}}\big(\frac{(1-\sqrt{1-n^2})^{\omega-1}-(1+\sqrt{1-n^2})^{\omega-1}}{\sqrt{1-n^2}}\big)^{2}\\
&\big((\frac{(1-\sqrt{1-n^2})}{2})^{\omega}+(\frac{(1+\sqrt{1-n^2})}{2})^{\omega}-1\big)^{\omega-2}\\
&+\big(\frac{(\omega-1)n[(1-\sqrt{1-n^2})^{\omega-2}+(1+\sqrt{1-n^2})^{\omega-2}]}{1-n^2}\\
&+\frac{n[(1-\sqrt{1-n^2})^{\omega-1}-(1+\sqrt{1-n^2})^{\omega-1}]}{(1-n^2)^{\frac{3}{2}}}\big)\\
&\big((\frac{(1-\sqrt{1-n^2})}{2})^{\omega}+(\frac{(1+\sqrt{1-n^2})}{2})^{\omega}-1\big)^{\omega-1}.
\end{array}
\end{equation*}\
From Fig. \ref{fig 3} we see that $g_{\omega}(n)\leq0$ for $0<\omega\leq1$ and $0< n<1$. Namely, $f_\omega(n)$ is a strictly monotonically decreasing function with respect to $n$ for given $\omega$. Hence, $f_\omega(\theta_{1}')=f_\omega(\theta_{2}')$ implies that $\theta_{1}'=\theta_{2}'$. If $\frac{\partial H_\omega(\theta_{1},\theta_{2})}{\partial \theta_{1}}|_{(\theta_{1}',\theta_{2}')}=0$, then $f_\omega(\sqrt{2}\theta_{1}')=f_\omega(\theta_{1}')$, which contradicts to the strict monotonicity of $f_\omega(n)$. Hence $H_\omega(\theta_{1},\theta_{2})$ has no vanishing gradient in the interior of $R$.
\begin{figure}[htbp]
\centering
{\includegraphics[width=6cm,height=4cm]{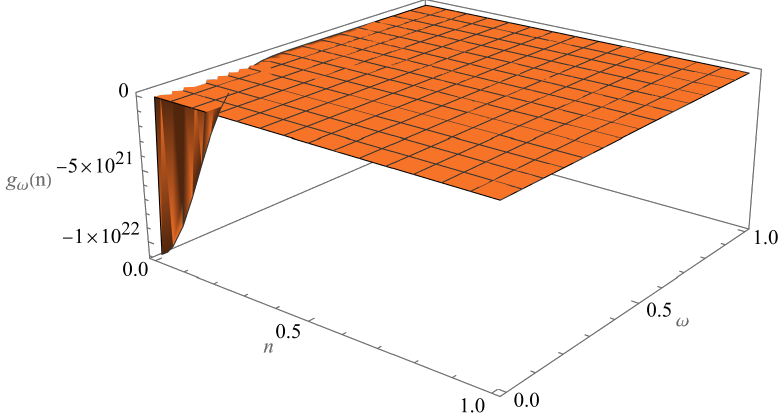}}
\caption{$g_{\omega}(n)$ as a function of $n$ and $\omega$ for $n\in(0,1)$ and $\omega\in(0,1]$.} \label{fig 3}
\end{figure}

Next we discuss the boundary values of $H_\omega(\theta_{1},\theta_{2})$ in the region $R$. If $\theta_{1}=0$ or $\theta_{2}=0$, then $H_\omega(\theta_{1},\theta_{2})=0$. If $\theta_{1}^2+\theta_{2}^2=1$, then $H_\omega(\theta_{1},\theta_{2})$ reduces to
\small\begin{equation*}
\begin{array}{rl}
p_\omega(\theta_{1})=&H_\omega(\theta_{1},\sqrt{1-\theta_{1}^{2}})\\
=&(\frac{1}{2})^{\omega^{2}}\{(2-2^{\omega})^{\omega}-[(1+\sqrt{1-\theta_{1}^2})^\omega+(1-\sqrt{1-\theta_{1}^2})^\omega\\
&-2^{\omega}]^{\omega}-[(1+\theta_{1})^\omega+(1-\theta_{1})^\omega-2^{\omega}]^{\omega}\}.\\
\end{array}
\end{equation*}\normalsize
From
\begin{equation*}
\begin{array}{rl}
\frac{dp_\omega(\theta_{1})}{d\theta_{1}}=&\frac{\omega^{2}}{2^{\omega^{2}}}\{-[-2^\omega+(1-\theta_{1})^\omega+(1+\theta_{1})^\omega]^{\omega-1}\\
&[(1+\theta_{1})^{\omega-1}-(1-\theta_{1})^{\omega-1}]\\
&-[-2^\omega+(1-\sqrt{1-\theta_{1}^2})^\omega+(1+\sqrt{1-\theta_{1}^2})^\omega]^{\omega-1}\\
&\frac{\theta[(1-\sqrt{1-\theta_{1}^2})^{\omega-1}-(1+\sqrt{1-\theta_{1}^2})^{\omega-1}]}{\sqrt{1-\theta_{1}^2}}\},
\end{array}
\end{equation*}
we have $\frac{dp_\omega(\theta_{1})}{d\theta_{1}}=0$ for $\theta_{1}=\frac{1}{\sqrt{2}}$. Since $p_\omega(0)=p_\omega(1)=0$, the sign of $p_\omega(\theta_{1})$ is determined by $p_\omega(\frac{1}{\sqrt{2}})=(\frac{1}{2})^{\omega^{2}}\{(2-2^{\omega})^{\omega}-2[(1+\sqrt{\frac{1}{2}})^\omega+(1-\sqrt{\frac{1}{2}})^\omega-2^{\omega}\}$. Fig. \ref{fig 4} shows that $p_\omega(\frac{1}{\sqrt{2}})$ is always non-positive for $0\leq\omega\leq1$.
Therefore, we have
\begin{equation}\label{2.10}
\begin{array}{rl}
h_\omega(\sqrt{\theta_{1}^2+\theta_{2}^2})\leq h_\omega(\theta_{1})+h_\omega(\theta_{2})
\end{array}
\end{equation}
for $0\leq\omega\leq1$ and $(\theta_{1},\theta_{2})\in R$.
\begin{figure}[htbp]
\centering
{\includegraphics[width=6cm,height=4cm]{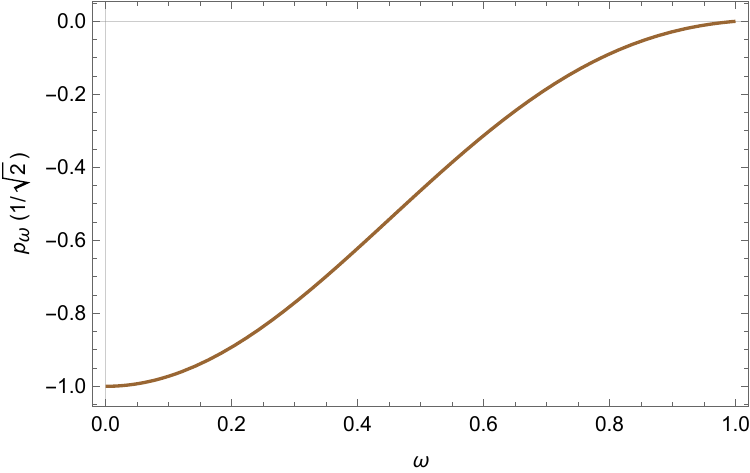}}
\caption{The function $p_\omega(1/\sqrt{2})$ is non-positive for $0< \omega\leq1$.} \label{fig 4}
\end{figure}

From (\ref{2.10}), we have the following polygamy relation satisfied by $G_\omega$CoA.

\begin{theorem}
For any $N$-qubit quantum state $\rho_{A\cdots B_{N-1}}$, we have
\begin{equation}\label{2.11}
\begin{array}{rl}
\mathfrak{C}_\omega^a(\rho_{A|B_1\cdots B_{N-1}})\leq\mathfrak{C}_\omega^a(\rho_{AB_1})+\cdots+\mathfrak{C}_\omega^a(\rho_{AB_{N-1}})
\end{array}
\end{equation}
for $0.85798\leq\omega\leq1$, where $\mathfrak{C}_\omega^a(\rho_{A|B_1\cdots B_{N-1}})$ denotes the $G_\omega$CoA of $\rho_{AB_1\cdots B_{N-1}}$ in the partition $A|B_1\cdots B_{N-1}$, $\rho_{AB_{j-1}}$ is the reduced density matrix with respect to the subsystem $AB_{j-1}$, $j=2,3,\cdots,N$.

\begin{proof} 1) $(C_{AB_1}^a)^2+\cdots+(C_{AB_{N-1}}^a)^2\leq1$. We have
\begin{equation}\label{2.12}
\begin{array}{rl}
&\mathfrak{C}_\omega(|\varphi\rangle_{A|B_1\cdots B_{N-1}})\\
=&h_\omega[C(|\varphi\rangle_{A|B_1\cdots B_{N-1}})]\\
\leq&h_\omega\big[\sqrt{(C^a_{AB_1})^2+\cdots+(C^a_{AB_{N-1}})^2}\big]\\
\leq&h_\omega(C^a_{AB_1})+h_\omega\big[\sqrt{(C^a_{AB_2})^2+\cdots+(C^a_{AB_{N-1}})^2}\big]\\
\leq&\cdots\\
\leq& h_\omega(C^a_{AB_1})+\cdots+h_\omega(C^a_{AB_{N-1}})\\
\leq&\mathfrak{C}_\omega^a(\rho_{AB_1})+\cdots+\mathfrak{C}_\omega^a(\rho_{AB_{N-1}}),
\end{array}
\end{equation}
where the first inequality holds according to the formula (\ref{2.9}) and that $h_\omega(\theta)$ is monotonically increasing for $0\leq\theta\leq1$ as shown in the proof of Theorem 1, the second and the penultimate inequalities are from (\ref{2.10}), and the last inequality is from the Corollary \ref{c1}.

2) $(C_{AB_1}^a)^2+\cdots+(C_{AB_{N-1}}^a)^2>1$. In this case there is some $j$ such that $(C_{AB_1}^a)^2+\cdots+(C_{AB_{j-1}}^a)^2\leq1$, whereas $(C_{AB_1}^a)^2+\cdots+(C_{AB_{j}}^a)^2>1$, $2\leq j\leq N$. Set
$S=(C_{AB_1}^a)^2+\cdots+(C_{AB_{j}}^a)^2-1$. Then one has
\begin{align}\label{2.13}
&\mathfrak{C}_\omega(|\varphi\rangle_{A|B_1\cdots B_{N-1}}) \notag\\
=&h_\omega[C(|\varphi\rangle_{A|B_1\cdots B_{N-1}})] \notag\\
\leq& h_\omega(1) \notag\\
=&h_\omega\big[\sqrt{(C_{AB_1}^a)^2+\cdots+(C_{AB_{j}}^a)^2-S}\big] \notag\\
\leq& h_\omega\big[\sqrt{(C_{AB_1}^a)^2+\cdots+(C_{AB_{j-1}}^a)^2}\big]\\
&+h_\omega[\sqrt{(C_{AB_{j}}^a)^2-S}] \notag\\
\leq& h_\omega(C_{AB_1}^a)+\cdots+h_\omega(C_{AB_{j-1}}^a)+h_\omega(C_{AB_{j}}^a) \notag\\
\leq&\mathfrak{C}^a_\omega(\rho_{AB_1})+\cdots+\mathfrak{C}^a_\omega(\rho_{AB_{N-1}}),\notag
\end{align}
similar to the proof of the inequality (\ref{2.12}).

Let $\rho_{AB_1\cdots B_{N-1}}$ be a multiqubit mixed state and $\{p_i,|\varphi_i\rangle_{A|B_1\cdots B_{N-1}}\}$ be the pure state decomposition such that $\mathfrak{C}_\omega^a(\rho_{A|B_1\cdots B_{N-1}})=\sum_ip_i\mathfrak{C}_\omega(|\varphi_i\rangle_{A|B_1\cdots B_{N-1}})$. We have
\begin{equation}\label{2.14}
\begin{array}{rl}
&\mathfrak{C}_\omega^a(\rho_{A|B_1\cdots B_{N-1}})\\
=&\sum\limits_ip_i\mathfrak{C}_\omega(|\varphi_i\rangle_{A|B_1\cdots B_{N-1}})\\
\leq&\sum\limits_ip_i[\mathfrak{C}^a_\omega(\rho^i_{AB_1})+\cdots+\mathfrak{C}^a_\omega(\rho^i_{AB_{N-1}})]\\
=&\sum\limits_ip_i\mathfrak{C}^a_\omega(\rho^i_{AB_1})+\cdots+\sum\limits_ip_i\mathfrak{C}^a_\omega(\rho^i_{AB_{N-1}})\\
\leq&\mathfrak{C}^a_\omega(\rho_{AB_1})+\cdots+\mathfrak{C}^a_\omega(\rho_{AB_{N-1}}),
\end{array}
\end{equation}
where $\rho^i_{AB_{j-1}}$ is the reduced density matrix of $|\varphi_i\rangle_{AB_1\cdots B_{N-1}}$ with respect to $AB_{j-1}$ system, the first inequality is derived by inequalities (\ref{2.12}) and (\ref{2.13}), and the second inequality follows from the definition of $G_\omega$-CoA.

Combining inequalities (\ref{2.12}), (\ref{2.13}), and (\ref{2.14}), we get inequality (\ref{2.11}) for any $N$-qubit quantum states.
\end{proof}
\end{theorem}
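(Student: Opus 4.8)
The plan is to prove the inequality first for pure states, where it collapses entirely to the behaviour of the scalar function $h_\omega$, and then to lift it to mixed states through the optimal decomposition defining $\mathfrak{C}^a_\omega$. Throughout I would lean on four facts that are already available: the Gour \emph{et al.}\ squared-concurrence polygamy bound (\ref{2.9}); the subadditivity estimate (\ref{2.10}), $h_\omega(\sqrt{\theta_1^2+\theta_2^2})\le h_\omega(\theta_1)+h_\omega(\theta_2)$ on $R$; the monotonicity of $h_\omega$ on $[0,1]$ from the proof of Theorem~\ref{t1}; and the termwise bound $h_\omega(C^a_{AB_j})\le\mathfrak{C}^a_\omega(\rho_{AB_j})$ established in Corollary~\ref{c1}.

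For a pure state $|\varphi\rangle_{A|B_1\cdots B_{N-1}}$, since $A$ is a single qubit the Schmidt rank across this cut is at most two, so the pure-state relation (\ref{11}) applies and $\mathfrak{C}_\omega(|\varphi\rangle_{A|B_1\cdots B_{N-1}})=h_\omega\big[C(|\varphi\rangle_{A|B_1\cdots B_{N-1}})\big]$. Suppose first that $\sum_{j}(C^a_{AB_j})^2\le1$. Feeding (\ref{2.9}) into the monotonicity of $h_\omega$ gives $\mathfrak{C}_\omega(|\varphi\rangle)\le h_\omega\!\big(\sqrt{\sum_{j}(C^a_{AB_j})^2}\big)$, and then I would peel the terms off one at a time using (\ref{2.10}) — each intermediate pair $\big(C^a_{AB_k},\sqrt{\sum_{l>k}(C^a_{AB_l})^2}\big)$ lies in $R$ because the sum of its squares is $\le1$ — to arrive at $\mathfrak{C}_\omega(|\varphi\rangle)\le\sum_j h_\omega(C^a_{AB_j})\le\sum_j\mathfrak{C}^a_\omega(\rho_{AB_j})$, the last step by Corollary~\ref{c1}.

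If instead $\sum_j(C^a_{AB_j})^2>1$, I would truncate: let $j$ be the first index with $\sum_{k\le j}(C^a_{AB_k})^2>1$, set $S=\sum_{k\le j}(C^a_{AB_k})^2-1\ge0$, and note $S\le(C^a_{AB_j})^2$ since $\sum_{k\le j-1}(C^a_{AB_k})^2\le1$. Because the concurrence across a cut with a qubit on one side is at most $1$, monotonicity of $h_\omega$ gives $\mathfrak{C}_\omega(|\varphi\rangle)\le h_\omega(1)=h_\omega\!\big(\sqrt{\sum_{k\le j-1}(C^a_{AB_k})^2+((C^a_{AB_j})^2-S)}\big)$; splitting off the pair $\big(\sqrt{\sum_{k\le j-1}(C^a_{AB_k})^2},\sqrt{(C^a_{AB_j})^2-S}\big)$, which lies in $R$ with squares summing to $1$, via (\ref{2.10}), continuing on the first summand as before, and using $h_\omega(\sqrt{(C^a_{AB_j})^2-S})\le h_\omega(C^a_{AB_j})$, yields $\mathfrak{C}_\omega(|\varphi\rangle)\le\sum_{k\le j}h_\omega(C^a_{AB_k})\le\sum_{k}\mathfrak{C}^a_\omega(\rho_{AB_k})$. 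Finally, for a mixed state $\rho_{AB_1\cdots B_{N-1}}$ I would pick an optimal decomposition $\{p_i,|\varphi_i\rangle\}$ with $\mathfrak{C}^a_\omega(\rho_{A|B_1\cdots B_{N-1}})=\sum_i p_i\mathfrak{C}_\omega(|\varphi_i\rangle)$, apply the two pure-state bounds to each $|\varphi_i\rangle$, exchange the two summations, and invoke $\sum_i p_i\mathfrak{C}^a_\omega(\rho^i_{AB_j})\le\mathfrak{C}^a_\omega(\rho_{AB_j})$ — valid because concatenating the optimal decompositions of the $\rho^i_{AB_j}$ is a legitimate decomposition of $\rho_{AB_j}=\sum_i p_i\rho^i_{AB_j}$, so the defining maximum can only be larger.

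With (\ref{2.9}), (\ref{2.10}), the monotonicity of $h_\omega$, and Corollary~\ref{c1} in hand, the only genuinely delicate point is the branch $\sum_j(C^a_{AB_j})^2>1$: one must choose the truncation index correctly and check that every pair fed into (\ref{2.10}) really lies in $R$ and has a nonnegative residual coordinate. If (\ref{2.10}) had not already been proved, that subadditivity of $h_\omega$ — reducing to the sign analysis of $H_\omega$ on the quarter disc $R$ — would be the actual crux; here it is the step that can be quoted.
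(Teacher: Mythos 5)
Your proposal is correct and follows essentially the same route as the paper: the same two-case split on whether $\sum_j(C^a_{AB_j})^2\leq1$, the same truncation with the residual $S$ in the second case, the same iteration of (\ref{2.10}) together with the monotonicity of $h_\omega$ and the bound $h_\omega(C^a_{AB_j})\leq\mathfrak{C}^a_\omega(\rho_{AB_j})$ from Corollary~\ref{c1}, and the same lift to mixed states via the optimal decomposition. Your added checks (that each pair fed into (\ref{2.10}) lies in $R$ with a nonnegative residual coordinate, and that concatenating decompositions justifies $\sum_ip_i\mathfrak{C}^a_\omega(\rho^i_{AB_j})\leq\mathfrak{C}^a_\omega(\rho_{AB_j})$) are welcome refinements of steps the paper leaves implicit.
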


\begin{corollary}
For any $N$-qubit quantum state $\rho_{A_1\cdots A_n}$, the $\alpha$-th $(0\leq\alpha\leq1)$ power of $G_\omega$CoA fulfills the following polygamy relation,
\begin{equation}\label{poly2}
\begin{array}{rl}
[\mathfrak{C}_\omega^a(\rho_{A_1|A_2\cdots A_n})]^{\alpha}\leq[\mathfrak{C}_\omega^a(\rho_{A_1A_2})]^{\alpha}+\cdots+[\mathfrak{C}_\omega^a(\rho_{A_1A_n})]^{\alpha}.\\
\end{array}
\end{equation}

\begin{proof}Based on the relation (\ref{2.12}), one derives
\begin{equation*}
\begin{array}{rl}
&[\mathfrak{C}_\omega^a(\rho_{A_1|A_2\cdots A_n})]^{\alpha}\\
&~~\leq[\mathfrak{C}_\omega^a(\rho_{A_1A_2})+\cdots+\mathfrak{C}_\omega^a(\rho_{A_1A_n})]^{\alpha}\\
&~~\leq[\mathfrak{C}_\omega^a(\rho_{A_1A_2})]^{\alpha}+\cdots+[\mathfrak{C}_\omega^a(\rho_{A_1A_n})]^{\alpha}.\\
\end{array}
\end{equation*}
The last inequality is valid since the relation $(\sum_i\theta_i)^{\alpha}\leq\sum_i\theta_i^\alpha$ holds for $0\leq \theta_i\leq1$ and $0\leq \alpha\leq1$.
\end{proof}
\end{corollary}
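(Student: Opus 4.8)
The plan is to obtain the $\alpha$-th power version as a soft consequence of the unpowered polygamy relation already established in Theorem~2 (inequality~(\ref{2.11})), combined with two elementary facts about the map $t\mapsto t^{\alpha}$ on $[0,\infty)$ for $0\le\alpha\le1$: it is monotonically non-decreasing, and it is subadditive, i.e. $(t_1+\cdots+t_k)^{\alpha}\le t_1^{\alpha}+\cdots+t_k^{\alpha}$ for all $t_j\ge0$. No new estimate on $h_\omega$ or on the convex roof is needed.

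First I would apply Theorem~2 to the $N$-qubit state $\rho_{A_1\cdots A_n}$ with $A_1$ as the distinguished party, which gives $\mathfrak{C}_\omega^a(\rho_{A_1|A_2\cdots A_n})\le\mathfrak{C}_\omega^a(\rho_{A_1A_2})+\cdots+\mathfrak{C}_\omega^a(\rho_{A_1A_n})$ for $0.85798\le\omega\le1$; since this is the only non-elementary input, the same $\omega$-range is inherited by the corollary. Raising both sides to the power $\alpha\in[0,1]$ and using monotonicity of $t\mapsto t^{\alpha}$ preserves the inequality, yielding $[\mathfrak{C}_\omega^a(\rho_{A_1|A_2\cdots A_n})]^{\alpha}\le\big[\mathfrak{C}_\omega^a(\rho_{A_1A_2})+\cdots+\mathfrak{C}_\omega^a(\rho_{A_1A_n})\big]^{\alpha}$.

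Then I would invoke the subadditivity estimate on the right-hand side with $t_j=\mathfrak{C}_\omega^a(\rho_{A_1A_j})\ge0$ to bound it by $[\mathfrak{C}_\omega^a(\rho_{A_1A_2})]^{\alpha}+\cdots+[\mathfrak{C}_\omega^a(\rho_{A_1A_n})]^{\alpha}$, and chaining the two displays concludes the argument. For completeness I would include the one-line justification of subadditivity: fixing all summands but one, the function $x\mapsto(x+c)^{\alpha}-x^{\alpha}-c^{\alpha}$ has non-positive derivative $\alpha[(x+c)^{\alpha-1}-x^{\alpha-1}]\le0$ because $\alpha-1<0$, and it vanishes at $x=0$, so it stays $\le0$; iterating over the summands gives the general $k$-term inequality.

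There is essentially no substantive obstacle here, as the statement is a routine corollary of Theorem~2. The only points requiring a little care are bookkeeping ones: recording that the validity range $0.85798\le\omega\le1$ is inherited from Theorem~2 even though it is not restated in the corollary, and noting that one does \emph{not} actually need the normalization $\mathfrak{C}_\omega^a(\rho_{A_1A_j})\le1$ for the power inequality, since subadditivity of $t\mapsto t^{\alpha}$ holds on all of $[0,\infty)$.
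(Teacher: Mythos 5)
Your proposal is correct and follows essentially the same route as the paper: invoke the unpowered polygamy relation of Theorem~2, raise both sides to the power $\alpha$ using monotonicity of $t\mapsto t^{\alpha}$, and then apply subadditivity $(\sum_i\theta_i)^{\alpha}\le\sum_i\theta_i^{\alpha}$ termwise. Your added remarks — that the $\omega$-range is inherited from Theorem~2 and that subadditivity holds on all of $[0,\infty)$ rather than only for $\theta_i\le1$ as the paper states — are minor but accurate refinements of the same argument.
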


Next we consider the following example to illustrate the validity of our polygamy relations of multiqubit entanglement.

{\noindent\bf Example 1}.
Let us consider the three-qubit generlized $\mathrm{W}$-class state,
\begin{equation}
|\psi\rangle_{A|BC}=\frac{1}{2}(|100\rangle+|010\rangle)+\frac{\sqrt{2}}{2}|001\rangle.
\end{equation}
We have $C_{A|BC}^{a}=\frac{\sqrt{3}}{2}$, $C_{AB}^{a}=\frac{1}{2}$ and
$C_{AC}^{a}=\frac{\sqrt{2}}{2}$. Thus we get
\begin{equation}
\begin{array}{rl}
Z_{1}&\equiv[\mathfrak{C}_\omega^a(\rho_{A|BC})]^{\alpha}
=h_\omega^{\alpha}(C_{A|BC}^{a})\\
&=[(\frac{1+\sqrt{1-(\frac{\sqrt{3}}{2})^2}}{2})^\omega+(\frac{1-\sqrt{1-(\frac{\sqrt{3}}{2})^2}}{2})^\omega-1]^{\alpha\omega},
\end{array}
\end{equation}
and
\begin{equation}
\begin{array}{rl}
Z_{2}&\equiv[\mathfrak{C}_\omega^a(\rho_{AB})]^{\alpha}+[\mathfrak{C}_\omega^a(\rho_{AC})]^{\alpha}
=h_\omega^{\alpha}(C_{AB}^{a})+h_\omega^{\alpha}(C_{AC}^{a})\\
&=[(\frac{1+\sqrt{1-(\frac{1}{2})^2}}{2})^\omega+(\frac{1-\sqrt{1-(\frac{1}{2})^2}}{2})^\omega-1]^{\alpha\omega}\\
&~~+[(\frac{1+\sqrt{1-(\frac{\sqrt{2}}{2})^2}}{2})^\omega+(\frac{1-\sqrt{1-(\frac{\sqrt{2}}{2})^2}}{2})^\omega-1]^{\alpha\omega}.
\end{array}
\end{equation}
It is seen that $G_\omega$CoA fulfills the polygamy relation, see Fig. \ref{fig 5}.  Fig. \ref{fig 6} shows the difference $Z=Z_{2}-Z_{1}$ for the $G_\omega$CoA. Fig. \ref{fig 7} shows the case of $\omega=0.9$.
\begin{figure}[htbp]
\centering
{\includegraphics[width=6cm,height=3.5cm]{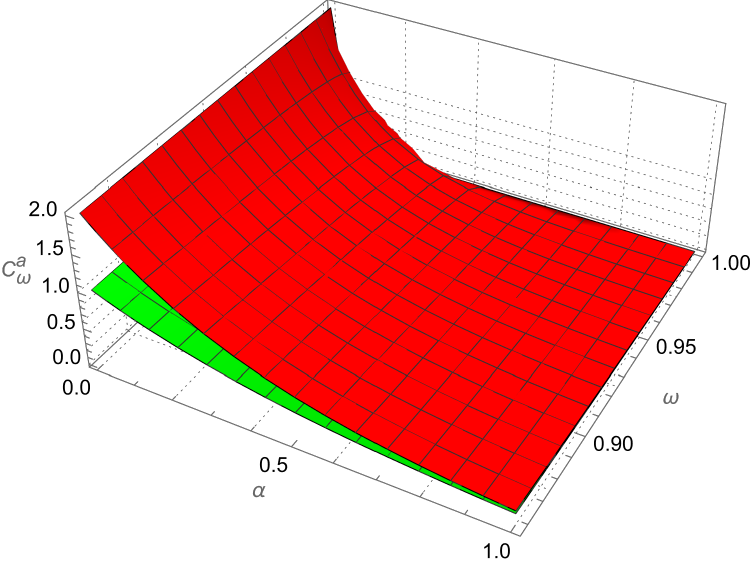}}
\caption{The green (red) surface represents the result of $Z_{2}$ ($Z_{1}$).} \label{fig 5}
\end{figure}
\begin{figure}[htbp]
\centering
{\includegraphics[width=6cm,height=3cm]{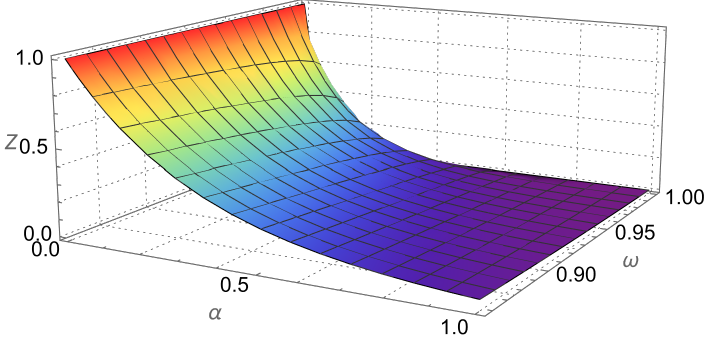}}
\caption{The surface for $Z=Z_{2}-Z_{1}$.} \label{fig 6}
\end{figure}
\begin{figure}[htbp]
\centering
{\includegraphics[width=6cm,height=4cm]{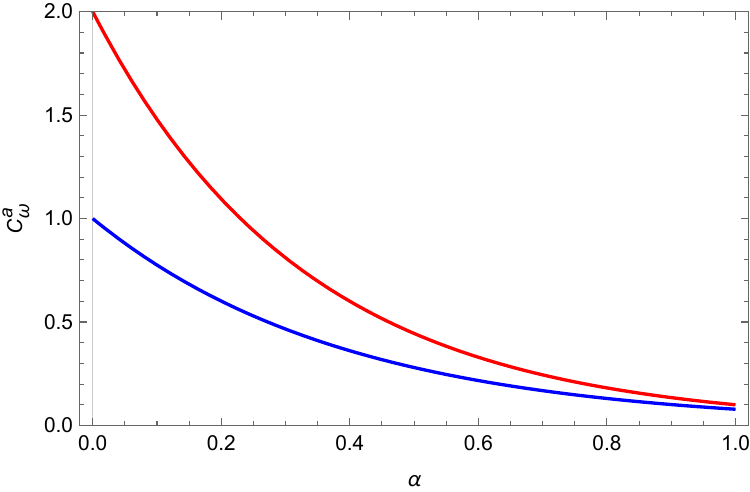}}
\caption{The red (bule) line represents $Z_{2}$ $(Z_{1})$ with $\omega=0.9$.} \label{fig 7}
\end{figure}

\section{Monogamy property}\label{V}
\subsection{Monogamy relation}
It is well known that the squared concurrence (SC) satisfies the following monogamy relation \cite{Coffman0523062000,Osborne2205032006},
\begin{equation}\label{2.15}
\begin{array}{rl}
C^2(|\varphi\rangle_{A|B_1\cdots B_{N-1}})\geq C^2_{AB_1}+\cdots+C^2_{AB_{N-1}}
\end{array}
\end{equation}
for any $N$-qubit state $|\varphi\rangle_{A\cdots B_{N-1}}$, where $C_{AB_{j-1}}$ is the concurrence of the reduced density operator $\rho_{AB_{j-1}}$, $j=2,\cdots,N$. Next we consider the monogamy property of $G_\omega$C.

Consider $\widetilde{H}_\omega(\theta_{1},\theta_{2})=h_\omega^2(\sqrt{\theta_{1}^2+\theta_{2}^2})-h_\omega^2(\theta_{1})-h_\omega^2(\theta_{2})$. From a similar analysis on ${H}_\omega(\theta_{1},\theta_{2})$ in the previous section, we have that $\nabla \widetilde{H}_\omega(\theta_{1},\theta_{2})=\big(\frac{\partial \widetilde{H}_\omega(\theta_{1},\theta_{2})}{\partial \theta_{1}},\frac{\partial \widetilde{H}_\omega(\theta_{1},\theta_{2})}{\partial \theta_{2}}\big)$ does not vanish in the interior of $R$ for $0<\omega\leq1$. If $\theta_{1}=0$ or $\theta_{2}=0$, then $\widetilde{H}_\omega(\theta_{1},\theta_{2})=0$. If $\theta_{1}^2+\theta_{2}^2=1$, set
\begin{equation*}
\begin{array}{rl}
&q_\omega(\theta_{1})=\widetilde{H}_\omega(\theta_{1},\sqrt{1-\theta_{1}^2})\\
&=(\frac{1}{2})^{2\omega^{2}}\{(2-2^\omega)^{2\omega}-[(1+\sqrt{1-\theta_{1}^2})^\omega+(1-\\
&~~\sqrt{1-\theta_{1}^2})^\omega-2^\omega]^{2\omega}-[(1+\theta_{1})^\omega+(1-\theta_{1})^\omega-2^{\omega}]^{2\omega}\}.\\
\end{array}
\end{equation*}
We have $\frac{dq_\omega(\theta_{1})}{d\theta_{1}}=0$ when $\theta_{1}=\frac{1}{\sqrt2}$ for $0<\omega\leq1$. Due to $q_\omega(0)=q_\omega(1)=0$, $q_\omega(\frac{1}{\sqrt{2}})$ determines the sign of function $q_\omega(\theta_{1})$. We plot $q_\omega(\frac{1}{\sqrt{2}})$ in Fig. \ref{fig 8}. The intermediate value theorem tells us that if a continuous function on the domain has two values with opposite signs, there must exist a root on the domain. Because $q_\omega(\frac{1}{\sqrt{2}})$ is a continuous function, and $q_{0.6}(\frac{1}{\sqrt{2}})<0$, $q_{0.8}(\frac{1}{\sqrt{2}})>0$, there is a root in the interval [0.6,~0.8]. After some straightforward calculations, we derive that the root is approximately equal to 0.7962.
We have that $q_\omega(\frac{1}{\sqrt{2}})\geq0$ for $0.7962\leq\omega\leq1$, see Fig.~\ref{fig 9}. Therefore, we have
\begin{equation}\label{2.16}
\begin{array}{rl}
h_\omega^2(\sqrt{\theta_{1}^2+\theta_{2}^2})\geq h_\omega^2(\theta_{1})+h_\omega^2(\theta_{2})
\end{array}
\end{equation}
for $0.7962\leq\omega\leq1$.
\begin{figure}[htbp]
\centering
{\includegraphics[width=6cm,height=4cm]{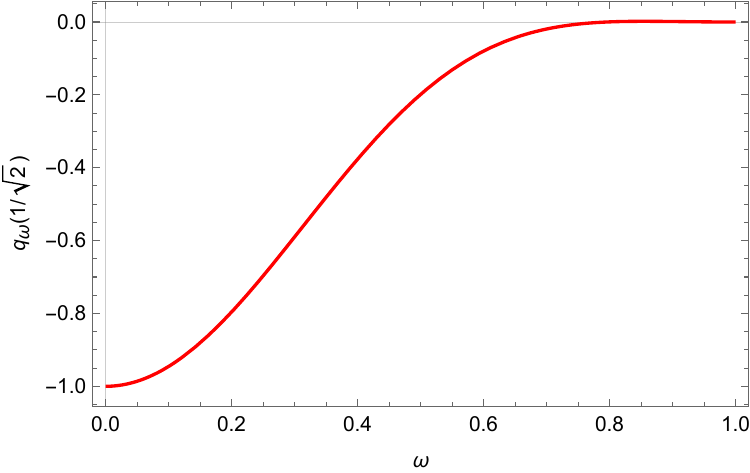}}
\caption{The function $q_\omega(\frac{1}{\sqrt{2}})$  for $0<\omega\leq1$.} \label{fig 8}
\end{figure}
\begin{figure}[htbp]
\centering
{\includegraphics[width=6cm,height=4cm]{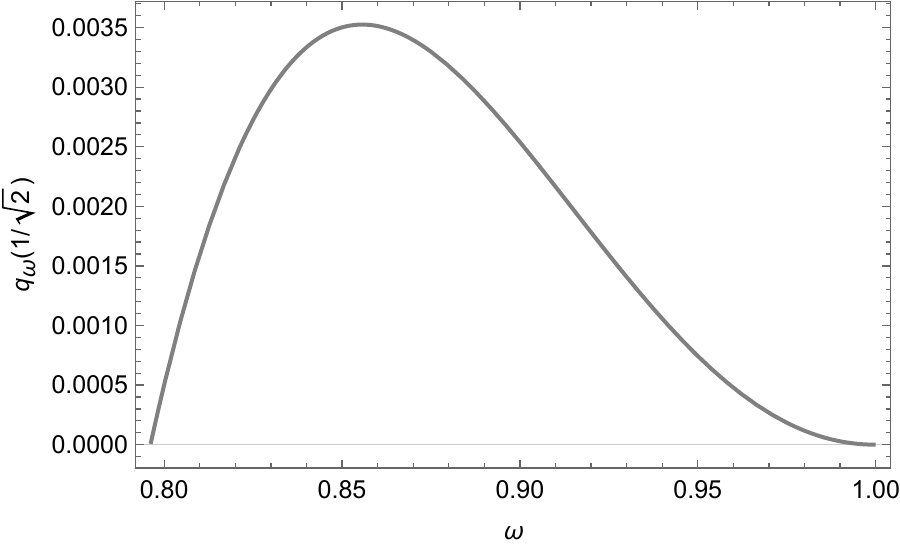}}
\caption{The function $q_\omega(\frac{1}{\sqrt{2}})$ is positive for $0.7962\leq\omega\leq1$.} \label{fig 9}
\end{figure}

Moreover, from the property of the function $h_\omega(\theta)$ introduced in the last section, we have that the function $h_\omega^2(\theta)$ is monotonically increasing with respect to $\theta$ for $0\leq \theta\leq1$ and $0<\omega\leq1$. We are ready to prove the following theorem.

\begin{theorem}\label{t3}
For an $N$-qubit quantum state $\rho_{AB_1\cdots B_{N-1}}$, we have
\begin{equation}\label{mono1}
\begin{array}{rl}
\mathfrak{C}_\omega^2(\rho_{A|B_1\cdots B_{N-1}})\geq\mathfrak{C}_\omega^2(\rho_{AB_1})
+\cdots+\mathfrak{C}_\omega^2(\rho_{AB_{N-1}})
\end{array}
\end{equation}
for $0.85798\leq\omega\leq1$.

\begin{proof}
For any $N$-qubit pure state $|\varphi\rangle_{AB_1\cdots B_{N-1}}$ and $0.85798\leq\omega\leq1$, we have
\begin{equation}\label{2.17}
\begin{array}{rl}
\mathfrak{C}_\omega^2(|\varphi\rangle_{A|B_1\cdots B_{N-1}})=&h_\omega^2[C(|\varphi\rangle_{A|B_1\cdots B_{N-1}})]\\
\geq&h_\omega^2\big(\sqrt{C_{AB_1}^2+\cdots+C_{AB_{N-1}}^2}\big)\\
\geq&h_\omega^2(C_{AB_1})\\
&+h_\omega^2\big(\sqrt{C_{AB_2}^2+\cdots+C_{AB_{N-1}}^2}\big)\\
\geq&h_\omega^2(C_{AB_1})+\cdots+h_\omega^2(C_{AB_{N-1}})\\
=&\mathfrak{C}_\omega^2(\rho_{AB_1})+\cdots+\mathfrak{C}_\omega^2(\rho_{AB_{N-1}}),
\end{array}
\end{equation}
where the first inequality is due to the fact that $h_\omega^2(\theta)$ is monotonically increasing with respect to $\theta$, the second and third inequalities can be gained by iterating formula (\ref{2.16}), the last equality follows from Theorem \ref{t1}.

Given a multiqubit mixed state $\rho_{A|B_1\cdots B_{N-1}}$. Suppose that $\{p_i,|\varphi_i\rangle_{A|B_1\cdots B_{N-1}}\}$ is the optimal pure state decomposition of $\mathfrak{C}_\omega(\rho_{A|B_1\cdots B_{N-1}})$, $\mathfrak{C}_\omega(\rho_{A|B_1\cdots B_{N-1}})=\sum_ip_i\mathfrak{C}_\omega(|\varphi_i\rangle_{A|B_1\cdots B_{N-1}})$. We have
\begin{equation*}
\begin{array}{rl}
\mathfrak{C}_\omega^2(\rho_{A|B_1\cdots B_{N-1}})=&[\sum_ip_i\mathfrak{C}_\omega(|\varphi_i\rangle_{A|B_1\cdots B_{N-1}})]^2\\
=&\{\sum_ip_ih_\omega[C(|\varphi_i\rangle_{A|B_1\cdots B_{N-1}})]\}^2\\
\geq&\{h_\omega[\sum_ip_iC(|\varphi_i\rangle_{A|B_1\cdots B_{N-1}})]\}^2\\
\geq&h_\omega^2[C(\rho_{A|B_1\cdots B_{N-1}})]\\
\geq&h_\omega^2(\sqrt{C_{AB_1}^2+\cdots+C_{AB_{N-1}}^2})\\
\geq&\mathfrak{C}_\omega^2(\rho_{AB_1})+\cdots+\mathfrak{C}_\omega^2(\rho_{AB_{N-1}}),\\
\end{array}
\end{equation*}
where the first inequality is assured owing to the convexity of $h_\omega(\theta)$ and the monotonicity of the function $y=\theta^2$ for $\theta>0$, the second inequality is based on the fact that $h_\omega^2(\theta)$ is monotonically increasing and the definition of concurrence, the third inequality is valid because the SC satisfies monogamy relation for multiqubit quantum states \cite{Osborne2205032006}, the last inequality is obtained similar to the inequality (\ref{2.17}).
\end{proof}
\end{theorem}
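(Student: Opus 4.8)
The plan is to establish the inequality first for pure $N$-qubit states and then promote it to mixed states through the convex-roof definition, relying on three ingredients already in hand: the identity $\mathfrak{C}_\omega=h_\omega(C)$ from Theorem~\ref{t1} (which also reads off from Eq.~(\ref{11}) for any bipartite pure state whose $A$-marginal is a qubit), the superadditivity $h_\omega^2(\sqrt{\theta_1^2+\theta_2^2})\ge h_\omega^2(\theta_1)+h_\omega^2(\theta_2)$ of inequality~(\ref{2.16}), and the monogamy of the squared concurrence for multiqubit states~\cite{Osborne2205032006}. For a pure state $|\varphi\rangle_{AB_1\cdots B_{N-1}}$ with $A$ a single qubit, I would write $\mathfrak{C}_\omega^2(|\varphi\rangle_{A|B_1\cdots B_{N-1}})=h_\omega^2[C(|\varphi\rangle_{A|B_1\cdots B_{N-1}})]$ and then use that $h_\omega^2$ is monotonically increasing on $[0,1]$ (a consequence of the first-derivative computation in the proof of Theorem~\ref{t1}) together with $C^2(|\varphi\rangle_{A|B_1\cdots B_{N-1}})\ge C_{AB_1}^2+\cdots+C_{AB_{N-1}}^2$ to get $\mathfrak{C}_\omega^2(|\varphi\rangle_{A|B_1\cdots B_{N-1}})\ge h_\omega^2\big(\sqrt{C_{AB_1}^2+\cdots+C_{AB_{N-1}}^2}\big)$; here every partial sum of the $C_{AB_j}^2$ is bounded by $C^2(|\varphi\rangle_{A|B_1\cdots B_{N-1}})\le 1$, so all arguments of $h_\omega$ are admissible. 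Peeling off one term at a time by iterating~(\ref{2.16}) gives $\ge\sum_{j=1}^{N-1}h_\omega^2(C_{AB_j})$, and since each $\rho_{AB_j}$ is a two-qubit state, Theorem~\ref{t1} turns this into $\sum_{j=1}^{N-1}\mathfrak{C}_\omega^2(\rho_{AB_j})$, which is the claim for pure states.

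For a mixed state $\rho_{AB_1\cdots B_{N-1}}$, I would pick an optimal decomposition $\{p_i,|\varphi_i\rangle\}$ with $\mathfrak{C}_\omega(\rho_{A|B_1\cdots B_{N-1}})=\sum_ip_i\mathfrak{C}_\omega(|\varphi_i\rangle_{A|B_1\cdots B_{N-1}})=\sum_ip_ih_\omega[C(|\varphi_i\rangle_{A|B_1\cdots B_{N-1}})]$. Convexity of $h_\omega$ on $[0,1]$ (which, by the second-derivative analysis of Theorem~\ref{t1}, holds exactly for $0.85798\le\omega\le 1$) and monotonicity of $y\mapsto y^2$ on $[0,\infty)$ then give $\mathfrak{C}_\omega^2(\rho_{A|B_1\cdots B_{N-1}})\ge h_\omega^2\big(\sum_ip_iC(|\varphi_i\rangle_{A|B_1\cdots B_{N-1}})\big)$. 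Because the concurrence of the mixed state is itself the infimum over decompositions, $\sum_ip_iC(|\varphi_i\rangle_{A|B_1\cdots B_{N-1}})\ge C(\rho_{A|B_1\cdots B_{N-1}})$, so another use of monotonicity of $h_\omega^2$ yields $\mathfrak{C}_\omega^2(\rho_{A|B_1\cdots B_{N-1}})\ge h_\omega^2[C(\rho_{A|B_1\cdots B_{N-1}})]$. From this point the pure-state chain applies verbatim: the squared-concurrence monogamy for multiqubit mixed states~\cite{Osborne2205032006} bounds $h_\omega^2[C(\rho_{A|B_1\cdots B_{N-1}})]$ below by $h_\omega^2\big(\sqrt{C_{AB_1}^2+\cdots+C_{AB_{N-1}}^2}\big)$, iterating~(\ref{2.16}) strips this to $\sum_jh_\omega^2(C_{AB_j})$, and Theorem~\ref{t1} identifies each summand with $\mathfrak{C}_\omega^2(\rho_{AB_j})$.

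The part that requires care is the range of $\omega$. Inequality~(\ref{2.16}) is only available for $\omega\ge 0.7962$, whereas the identity $\mathfrak{C}_\omega=h_\omega(C)$ and the convexity of $h_\omega$ are only available for $\omega\ge 0.85798$; the theorem therefore lives on the intersection $0.85798\le\omega\le 1$, and the bound must be stated with the larger value. Apart from this bookkeeping, the only genuinely nontrivial inputs are that all the intermediate arguments fed into $h_\omega$ remain in $[0,1]$ (forced by the concurrence-monogamy bounds) and the convexity of $h_\omega$, which is precisely what licenses passing $\sum_ip_i(\cdot)$ through $h_\omega$ in the mixed-state step; the rest is repeated application of~(\ref{2.16}) and monotonicity.
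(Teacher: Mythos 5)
Your proposal is correct and follows essentially the same route as the paper's own proof: the pure-state case via monotonicity of $h_\omega^2$, squared-concurrence monogamy, and iteration of~(\ref{2.16}), followed by the mixed-state case via the optimal decomposition, convexity of $h_\omega$, and the infimum definition of concurrence. Your added bookkeeping on the admissible range of the arguments of $h_\omega$ and on the intersection of the $\omega$-ranges is sound but does not change the argument.
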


\begin{corollary} For any $N$-qubit quantum state $\rho_{A\cdots B_{N-1}}$, the $\beta$-th ($\beta\geq2$) power of $G_\omega$-concurrence fulfills the following monogamy relation,
\begin{equation*}
\begin{array}{rl}
\mathfrak{C}_\omega^\beta(\rho_{A|B_1\cdots B_{N-1}})\geq\mathfrak{C}_\omega^\beta(\rho_{AB_1})+\cdots+\mathfrak{C}_\omega^\beta(\rho_{AB_{N-1}}).\\
\end{array}
\end{equation*}

\begin{proof} Based on the relation (\ref{mono1}), one derives
\begin{equation*}
\begin{array}{rl}
&\mathfrak{C}_\omega^\beta(\rho_{A|B_1\cdots B_{N-1}})\\
&~~\geq[\mathfrak{C}_\omega^2(\rho_{AB_1})+\cdots+\mathfrak{C}_\omega^2(\rho_{AB_{N-1}})]^{\frac{\beta}{2}}\\
&~~\geq\mathfrak{C}_\omega^\beta(\rho_{AB_1})+\cdots+\mathfrak{C}_\omega^\beta(\rho_{AB_{N-1}}).\\
\end{array}
\end{equation*}
The last inequality is valid since the relation $(\sum_i\theta_i^2)^{\frac{\beta}{2}}\geq\sum_i\theta_i^\beta$ holds for $0\leq \theta_i\leq1$ and $\beta\geq2$.
\end{proof}
\end{corollary}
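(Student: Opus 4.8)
The plan is to deduce the $\beta$-th power monogamy relation directly from the squared monogamy relation already established in Theorem~\ref{t3}, using only an elementary superadditivity property of power functions. First I would invoke Theorem~\ref{t3} on the admissible range $0.85798\leq\omega\leq1$, which gives
$\mathfrak{C}_\omega^2(\rho_{A|B_1\cdots B_{N-1}})\geq\sum_{j=1}^{N-1}\mathfrak{C}_\omega^2(\rho_{AB_j})$.
Since $\beta\geq2$, the exponent $s:=\beta/2$ satisfies $s\geq1$, and $x\mapsto x^{s}$ is monotonically nondecreasing on $[0,\infty)$; applying it to both sides of the squared relation yields
$\mathfrak{C}_\omega^\beta(\rho_{A|B_1\cdots B_{N-1}})\geq\big[\sum_{j=1}^{N-1}\mathfrak{C}_\omega^2(\rho_{AB_j})\big]^{s}$.

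Second, I would close the argument with the inequality $\big(\sum_i a_i\big)^{s}\geq\sum_i a_i^{s}$, valid for all nonnegative reals $a_i$ and any $s\geq1$; this follows, for instance, from $a_i^{s}\leq a_i\big(\sum_k a_k\big)^{s-1}$ when $a_i\le\sum_k a_k$, then summing over $i$. Taking $a_i=\mathfrak{C}_\omega^2(\rho_{AB_i})\geq0$ and $s=\beta/2$ gives
$\big[\sum_{j}\mathfrak{C}_\omega^2(\rho_{AB_j})\big]^{\beta/2}\geq\sum_{j}\big(\mathfrak{C}_\omega^2(\rho_{AB_j})\big)^{\beta/2}=\sum_{j}\mathfrak{C}_\omega^\beta(\rho_{AB_j})$,
and chaining this with the previous display produces the claimed relation.

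There is essentially no genuine obstacle here: the whole statement reduces to Theorem~\ref{t3} combined with a textbook fact about power functions, and since all the $G_\omega$-concurrences are manifestly nonnegative no extra positivity hypotheses are needed. The only point deserving a moment's care is to quote the elementary inequality in the correct direction for $s\geq1$ (superadditivity, $(\sum a_i)^{s}\geq\sum a_i^{s}$), in contrast to the regime $0\le s\le1$ used in the earlier polygamy Corollary, where the reverse subadditivity $(\sum a_i)^{s}\leq\sum a_i^{s}$ holds instead.
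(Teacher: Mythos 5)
Your proposal is correct and follows essentially the same route as the paper: apply Theorem~\ref{t3}, raise both sides to the power $\beta/2\geq1$ using monotonicity, and finish with the superadditivity $(\sum_i a_i)^{s}\geq\sum_i a_i^{s}$ for $s\geq1$. The only (harmless) difference is that you justify this last inequality for arbitrary nonnegative $a_i$, whereas the paper quotes it under the restriction $0\leq\theta_i\leq1$, which is not actually needed.
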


To illustrate the validity of the monogamy inequalities, we consider the following example.

{\noindent\bf Example 2}. Let us consider the three-qubit state $|\phi\rangle_{ABC}$ in
the generalized Schmidt decomposition,
\begin{equation}
|\phi\rangle_{A|BC}=\lambda_0|000\rangle+\lambda_1e^{i{\varphi}}|100\rangle+\lambda_2|101\rangle
+\lambda_3|110\rangle+\lambda_4|111\rangle,
\end{equation}
where $\lambda_i\geq0$, $i=0,1,2,3,4$, and $\sum_{i=0}^{4}\lambda_i^2 =1$. Set $\lambda_0=\lambda_3=\lambda_4=\frac{1}{\sqrt{5}}$, $\lambda_2=\sqrt{\frac{2}{5}}$ and $\lambda_1=0$. We get $C_{A|BC}=\frac{4}{5}$, $C_{AB}=\frac{2\sqrt{2}}{5}$ and
$C_{AC}=\frac{2}{5}$. Thus we have

\begin{equation}
\begin{array}{rl}
&W_{1}\equiv\mathfrak{C}_\omega^\beta(\rho_{A|BC})
=h_\omega^{\beta}(C_{A|BC})\\
&=[(\frac{1+\sqrt{1-(\frac{4}{5})^2}}{2})^\omega+(\frac{1-\sqrt{1-(\frac{4}{5})^2}}{2})^\omega-1]^{\beta\omega}
\end{array}
\end{equation}
and
\begin{equation}
\begin{array}{rl}
&W_{2}\equiv\mathfrak{C}_\omega^\beta(\rho_{AB})+\mathfrak{C}_\omega^\beta(\rho_{AC})
=h_\omega^\beta(C_{AB})+h_\omega^\beta(C_{AC})\\
&=[(\frac{1+\sqrt{1-(\frac{2\sqrt{2}}{5})^{2}}}{2})^{\beta\omega}+(\frac{1-\sqrt{1-(\frac{2\sqrt{2}}{5})^2}}{2})^\omega-1]^{\beta\omega}\\
&+[(\frac{1+\sqrt{1-(\frac{2}{5})^2}}{2})^\omega+(\frac{1-\sqrt{1-(\frac{2}{5})^2}}{2})^\omega-1]^{\beta\omega}.
\end{array}
\end{equation}
It is seen that $G_\omega$C fulfills the monogamy relation, see Fig. \ref{fig 10}. Fig. \ref{fig 11} shows the difference $W=W_{1}-W_{2}$ of the $G_\omega$C. Fig. \ref{fig 12} shows the case of $\omega=0.9$.
\begin{figure}[htbp]
\centering
{\includegraphics[width=7cm,height=5cm]{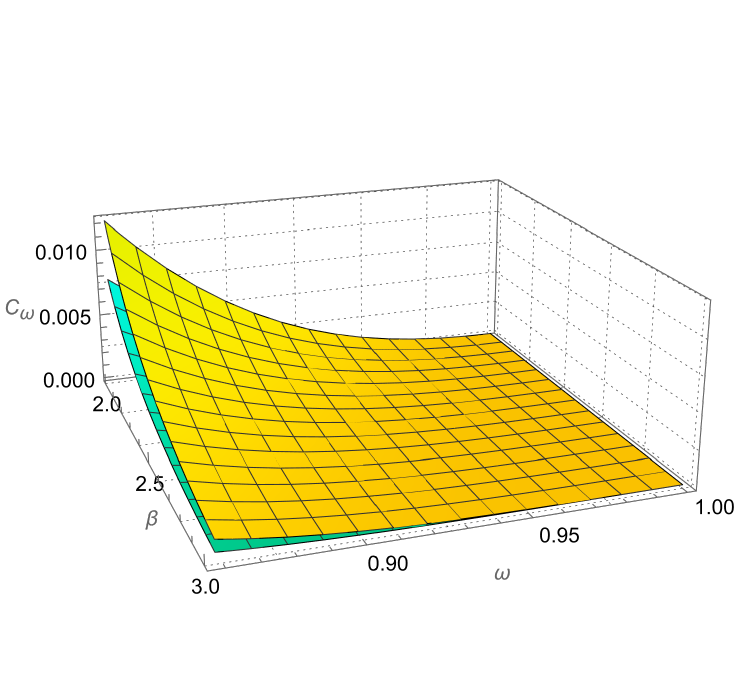}}
\caption{The orange (green) surface represents the result of $W_{1}$ ($W_{2}$).} \label{fig 10}
\end{figure}
\begin{figure}[htbp]
\centering
{\includegraphics[width=6cm,height=3cm]{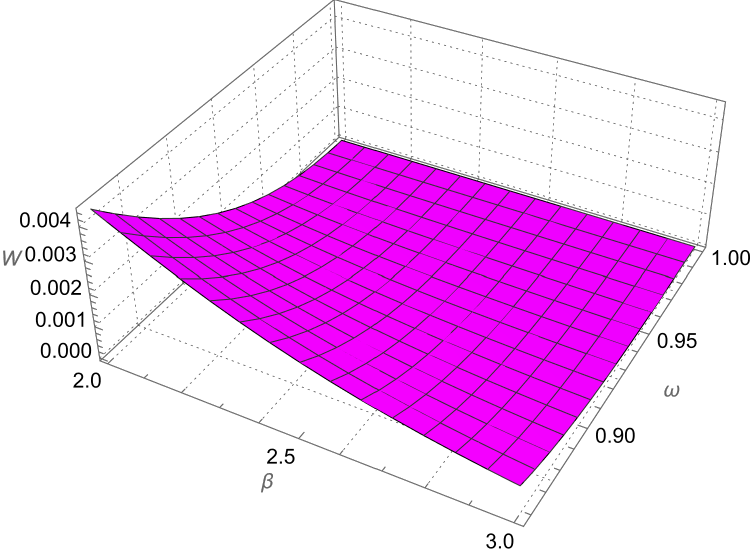}}
\caption{The surface represents $W=W_{1}-W_{2}$.} \label{fig 11}
\end{figure}
\begin{figure}[htbp]
\centering
{\includegraphics[width=6cm,height=4cm]{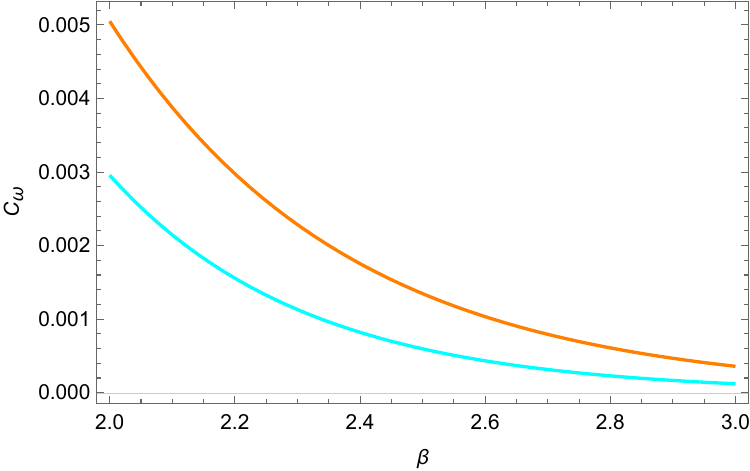}}
\caption{The orange (cyan) line represents our result
of $W_{1}$ $(W_{2})$ with $\omega=0.9$.} \label{fig 12}
\end{figure}

\subsection{Entanglement indicators}
Based on Theorem  \ref{t3}, we introduce a set of multipartite entanglement indicators,
\begin{equation}\label{2.18}
\begin{array}{rl}
\tau_\omega(\rho)=\min\limits_{\{p_l,|\varphi_l\rangle\}}\sum\limits_lp_l\tau_\omega(|\varphi_l\rangle_{A_1|A_2\cdots A_N}),\\
\end{array}
\end{equation}
where the minimum is taken over all feasible ensemble decompositions, $\tau_\omega(|\varphi_l\rangle_{A_1|A_2\cdots A_N})=\mathfrak{C}^2_\omega(|\varphi_l\rangle_{A_1|A_2\cdots A_N})-\sum_{j=2}^{N}\mathfrak{C}^2_\omega(\rho^l_{A_1A_{j}})$ with  $0.85798\leq\omega\leq1$. By using the concavity of the function ${G}_\omega(\rho)$ and following the method in deriving
the squared entanglement of formation \cite{Bai1005032014}, we have the following result.

\begin{theorem}
For any three-qubit state $\rho_{ABC}$, the tripartite entanglement indicator $\tau_\omega(\rho_{ABC})$ is zero for $0.85798\leq\omega\leq1$ iff $\rho_{ABC}$ is biseparable, namely, $\rho_{ABC}=\sum_ip_i\rho_{AB}^i\otimes\rho_{C}^i+\sum_iq_i\rho_{AC}^i\otimes\rho_{B}^i+\sum_ir_i\rho_{A}^i\otimes\rho_{BC}^i$.
\end{theorem}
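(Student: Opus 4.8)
The plan is to prove both directions of the equivalence. For the ``if'' direction, suppose $\rho_{ABC}$ is biseparable with a decomposition $\rho_{ABC}=\sum_ip_i\rho_{AB}^i\otimes\rho_{C}^i+\sum_iq_i\rho_{AC}^i\otimes\rho_{B}^i+\sum_ir_i\rho_{A}^i\otimes\rho_{BC}^i$. Each term in this convex sum can itself be decomposed into pure states, yielding an overall pure-state ensemble $\{t_k,|\psi_k\rangle\}$ of $\rho_{ABC}$ in which every $|\psi_k\rangle$ is a product state across one of the three bipartite cuts $A|BC$, $B|AC$, or $C|AB$. For such a $|\psi_k\rangle$ one shows $\tau_\omega(|\psi_k\rangle)=0$: if $|\psi_k\rangle$ is product across $A|BC$ then $\mathfrak{C}^2_\omega(|\psi_k\rangle_{A|BC})=0$ and each $\mathfrak{C}^2_\omega(\rho^k_{AA_j})=0$ since the reduced state $\rho^k_{AB}$, $\rho^k_{AC}$ are separable, so $\tau_\omega(|\psi_k\rangle)=0$; if instead $|\psi_k\rangle$ is product across $B|AC$ (or $C|AB$), then $\mathfrak{C}^2_\omega(|\psi_k\rangle_{A|BC})=\mathfrak{C}^2_\omega(\rho^k_{AC})$ (resp.\ $=\mathfrak{C}^2_\omega(\rho^k_{AB})$) while the other two-qubit term vanishes, again giving $\tau_\omega(|\psi_k\rangle)=0$. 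Since $\tau_\omega(\rho_{ABC})$ is defined as a minimum over ensembles and $\tau_\omega\ge0$ on pure states by Theorem~\ref{t3}, the ensemble $\{t_k,|\psi_k\rangle\}$ witnesses $\tau_\omega(\rho_{ABC})\le0$, hence $\tau_\omega(\rho_{ABC})=0$.

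For the ``only if'' direction, suppose $\tau_\omega(\rho_{ABC})=0$ and let $\{p_l,|\varphi_l\rangle\}$ be an optimal ensemble achieving the minimum, so $\sum_lp_l\tau_\omega(|\varphi_l\rangle)=0$. By Theorem~\ref{t3} each $\tau_\omega(|\varphi_l\rangle)\ge0$, so $\tau_\omega(|\varphi_l\rangle)=0$ for every $l$. The crux is then a structural claim about pure three-qubit states: $\tau_\omega(|\varphi\rangle_{ABC})=\mathfrak{C}^2_\omega(|\varphi\rangle_{A|BC})-\mathfrak{C}^2_\omega(\rho_{AB})-\mathfrak{C}^2_\omega(\rho_{AC})=0$ forces $|\varphi\rangle_{ABC}$ to be product across at least one bipartite cut. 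To see this, recall from the proof of Theorem~\ref{t3} that for a pure three-qubit state $\mathfrak{C}^2_\omega(|\varphi\rangle_{A|BC})=h_\omega^2(C(|\varphi\rangle_{A|BC}))\ge h_\omega^2(\sqrt{C_{AB}^2+C_{AC}^2})\ge h_\omega^2(C_{AB})+h_\omega^2(C_{AC})$, where the first inequality uses $C^2(|\varphi\rangle_{A|BC})\ge C_{AB}^2+C_{AC}^2$ (the CKW monogamy equality/inequality for pure three-qubit states) together with monotonicity of $h_\omega^2$, and the second uses $(\ref{2.16})$. Equality in $\tau_\omega(|\varphi\rangle)=0$ forces equality in both steps. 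Equality in $(\ref{2.16})$, i.e.\ $h_\omega^2(\sqrt{\theta_1^2+\theta_2^2})=h_\omega^2(\theta_1)+h_\omega^2(\theta_2)$, holds on the region $R$ only when $\theta_1=0$ or $\theta_2=0$ (this is exactly what the boundary analysis with $q_\omega$ established: the strict sign of $q_\omega(1/\sqrt2)$ rules out interior boundary equality, and no vanishing gradient occurs in the interior). Hence either $C_{AB}=0$ or $C_{AC}=0$. Combined with equality in the first step, $C^2(|\varphi\rangle_{A|BC})=C_{AB}^2+C_{AC}^2$, one treats cases: if both two-qubit concurrences vanish then $C(|\varphi\rangle_{A|BC})=0$ so $|\varphi\rangle$ is product across $A|BC$; if exactly one vanishes, say $C_{AC}=0$ but $C_{AB}\ne0$, then $C^2(|\varphi\rangle_{A|BC})=C_{AB}^2$, and one invokes the known classification of pure three-qubit states (the CKW $3$-tangle $\tau_{ABC}=C^2_{A|BC}-C_{AB}^2-C_{AC}^2$ vanishes) to conclude $|\varphi\rangle$ is either a product state or in the W-class; a short additional argument using $C_{AC}=0$ — which forces $\rho_{AC}$ separable — pins down that $|\varphi\rangle$ must be product across the $C|AB$ cut. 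Thus every $|\varphi_l\rangle$ is biseparable across one of the three cuts, and grouping the terms of $\sum_lp_l|\varphi_l\rangle\langle\varphi_l|$ by which cut they are product across yields precisely the claimed biseparable form of $\rho_{ABC}$.

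The main obstacle is the structural claim for pure states, specifically nailing down that $\tau_\omega(|\varphi\rangle_{ABC})=0$ implies biseparability rather than merely vanishing $3$-tangle. The inequality chain only gives $C_{AB}=0$ or $C_{AC}=0$ plus vanishing $3$-tangle; one must argue carefully that this is actually enough, using that a pure three-qubit state with, say, $\rho_{AC}$ separable and $\tau_{ABC}=0$ is necessarily a product state across $C|AB$. This requires either a direct parametrization of three-qubit pure states (e.g.\ the Acín et al.\ canonical form) or an appeal to the fact that for pure states $\rho_{AC}$ separable means $\rho_{AC}$ is a product or classically-correlated state, which for a \emph{pure} global state forces $C$ to be unentangled with $AB$. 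I would handle this by the canonical-form computation, checking the finitely many degenerate cases. The convex-roof/grouping argument in both directions is then routine, paralleling the treatment of squared entanglement of formation in Ref.~\cite{Bai1005032014}.
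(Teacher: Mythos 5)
The paper does not actually prove this theorem: it is stated with only a one\-/sentence appeal to ``the concavity of ${G}_\omega(\rho)$ and the method in deriving the squared entanglement of formation'' of Ref.~\cite{Bai1005032014}, so there is no in-paper argument to compare yours against step by step. Your reconstruction follows exactly the template that citation points to — convex-roof decomposition into pure states, nonnegativity of the pure-state residual from the monogamy inequality (\ref{2.17}), equality analysis of the chain $h_\omega^2(C_{A|BC})\geq h_\omega^2(\sqrt{C_{AB}^2+C_{AC}^2})\geq h_\omega^2(C_{AB})+h_\omega^2(C_{AC})$, and the three-qubit SLOCC classification — and its overall structure is sound. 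The ``if'' direction is complete as written.

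Two points in the ``only if'' direction need to be made explicit. First, you require that equality in (\ref{2.16}) occur \emph{only} on the axes $\theta_1=0$ or $\theta_2=0$; the paper establishes only the non-strict inequality, and the strict positivity of $\widetilde{H}_\omega$ off the axes rests on the graphical/numerical claims about $q_\omega(1/\sqrt{2})$ and the absence of interior critical points. Your proof inherits that numerical dependence (as does Theorem~\ref{t3} itself), so this is not a new gap, but it should be acknowledged as the load-bearing step. Second, your case analysis says vanishing $3$-tangle forces ``a product state or the W-class''; it actually forces product, biseparable, or W-class, and the fact you need — that a genuinely tripartite-entangled W-class state has \emph{all three} pairwise concurrences strictly positive, so that $C_{AC}=0$ excludes the W-class — should be stated and justified, e.g.\ via the D\"{u}r--Vidal--Cirac normal form $\sqrt{a}\ket{100}+\sqrt{b}\ket{010}+\sqrt{c}\ket{001}+\sqrt{d}\ket{000}$ with $abc\neq0$, for which the pairwise concurrences are $2\sqrt{ab}$, $2\sqrt{ac}$, $2\sqrt{bc}$. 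Once that is in place, $C_{AC}=0$ and $C_{AB}\neq0$ do pin down a product structure across $C|AB$, and the final regrouping of the optimal ensemble into the three biseparable families is routine.
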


In fact, the inequality
$\mathfrak{C}^2_\omega(\rho_{A_i|\overline{A_i}})\geq\sum_{j\neq i}\mathfrak{C}^2_\omega(\rho_{A_iA_j})$
is true for any $N$-qubit quantum states. As a consequence, we define a family of indicators,
\begin{equation}\label{2.19}
\begin{array}{rl}
\tau_\omega^i(\rho)=\min\limits_{\{p_l,|\varphi_l\rangle\}}\sum\limits_lp_l\tau_\omega^i(|\varphi_l\rangle_{A_i|\overline{A_i}}),
\end{array}
\end{equation}
where the minimum runs over all feasible pure state decompositions, $\tau_\omega^i(|\varphi_l\rangle_{A_i|\overline{A_i}})=\mathfrak{C}^2_\omega(|\varphi_l\rangle_{A_i|\overline{A_i}})-\sum\limits_{j\neq i}\mathfrak{C}^2_\omega(\rho^l_{A_iA_j})$, $i,j\in\{1,2,\cdots,N-1\}$. $\tau_\omega^i(\rho)$ reduces to  (\ref{2.18}) when $i=1$.

Such a series of entanglement indicators, $\tau_\omega^i(\rho)$, $i=1,2...,n$, allows us to better characterization of the entanglement distribution. They must not be zero for any genuinely multiqubit entangled states. Therefore, these indicators may compensate for the deficiency of tangles. In particular, we evaluate (\ref{2.18}) for W-state and the superposed state of Greenberger-Horne-Zeilinger (GHZ) and W states to demonstrate the universal nature of $\tau_\omega$ as an effective entanglement indicator.

{\noindent\bf Example 3}. For the $N$-qubit $W$ state $|W_n\rangle=\frac{|10\cdots0\rangle+|01\cdots0\rangle+\cdots+|00\cdots1\rangle}{\sqrt{N}}$,
the concurrence between subsystems $A$ and $B_1\cdots B_{N-1}$ is $C(|W_N\rangle)=\frac{2\sqrt{N-1}}{N}$, $C(\rho_{AB_{j-1}})=\frac{2}{N}$, $j=2,\cdots,N-1$. Hence we have
\begin{equation*}
\begin{array}{rl}
\tau_\omega(|W_N\rangle)=&\big[\big(\frac{1+\frac{N-2}{N}}{2}\big)^\omega+\big(\frac{1-\frac{N-2}{N}}{2}\big)^\omega-1\big]^{2\omega}-(N-1)\\
&\big[\big(\frac{1+\frac{\sqrt{N^2-4}}{N}}{2}\big)^\omega+\big(\frac{1-\frac{\sqrt{N^2-4}}{N}}{2}\big)^\omega-1\big]^{2\omega}.\\
\end{array}
\end{equation*}
The three tangle cannot detect the tripartite entanglement of the W-state \cite{Coffman0523062000}. However, the indicator $\tau_\omega$ can efficiently detect the entanglement. Taking $N=3,5,10$, we plot the indicator as a function of $\omega$ in Fig.~\ref{fig 13}. The indicator has non-zero values as long as the state is entangled.
\begin{figure}[htbp]
\centering
{\includegraphics[width=6cm,height=4.5cm]{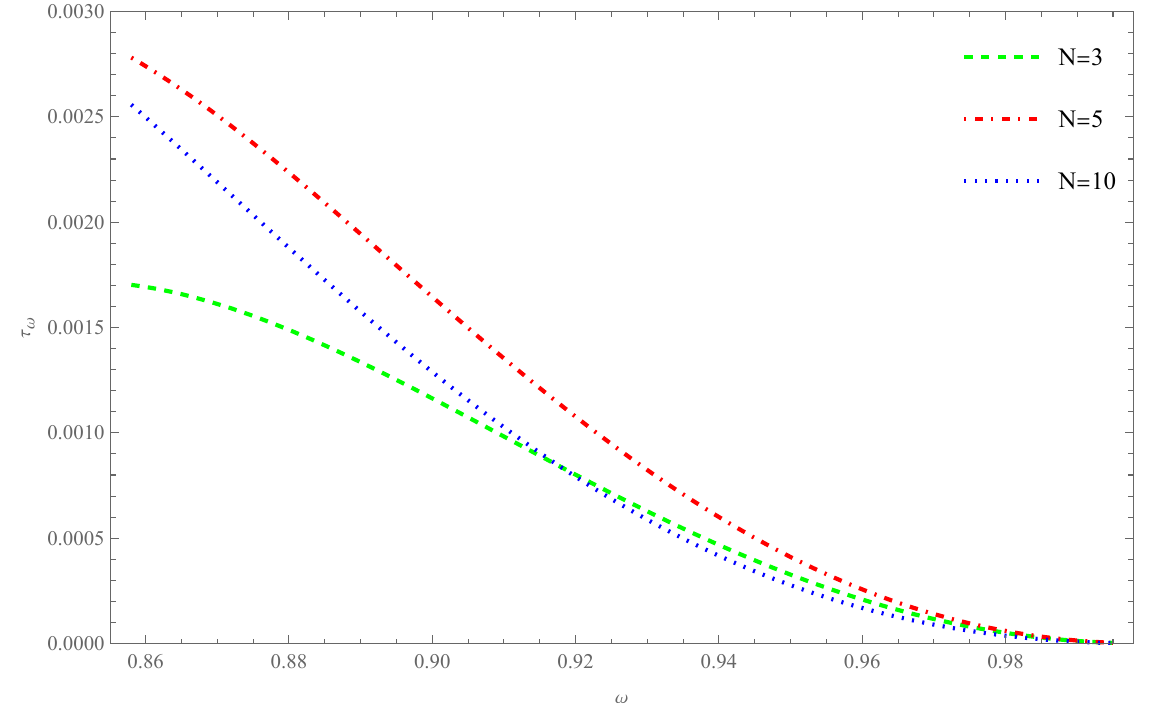}}
\caption{The red thick dotdashed, red thick dotdashed, and blue thick dotted line  correspond to $\tau_\omega(|W_N\rangle)$ with $N=3,5,10$, where $\omega\in[0.85798,1]$.} \label{fig 13}
\end{figure}

{\noindent\bf Example 4}. For the superposed state of the GHZ and the W state,
\be\label{lizi}
|\Psi\rangle_{ABC}=\sqrt{d}|GHZ\rangle-\sqrt{1-d}|W\rangle,
\ee
where $|GHZ\rangle=\frac{1}{\sqrt{2}}(|000\rangle+|111\rangle)$,
$|W\rangle=\frac{1}{\sqrt{3}}(|100\rangle+|010\rangle+|001\rangle)$,
the three tangle of $|\Psi\rangle_{ABC}$ is $\tau_C(|\Psi\rangle_{ABC})=C^{2}(|\psi\rangle_{A|BC})-C^{2}_{AB}-C^{2}_{AC}=\frac{|(9d^2-8\sqrt{6}\sqrt{d(1-d)^3})|}{9}$, which is zero for $d=0$ and $d=0.627$ \cite{Bai1005032014} and does not detect the entanglement. By calculating $\tau_\omega(|\Psi\rangle_{ABC})$ given in (\ref{2.18}) from the analytical formula of $\mathfrak{C}_{\omega}$ for the bipartite states, we show in Fig.\ref{fig 14} that $\tau_\omega$ is positive for all values of $d$.
\begin{figure}[htbp]
\centering
{\includegraphics[width=6cm,height=4.5cm]{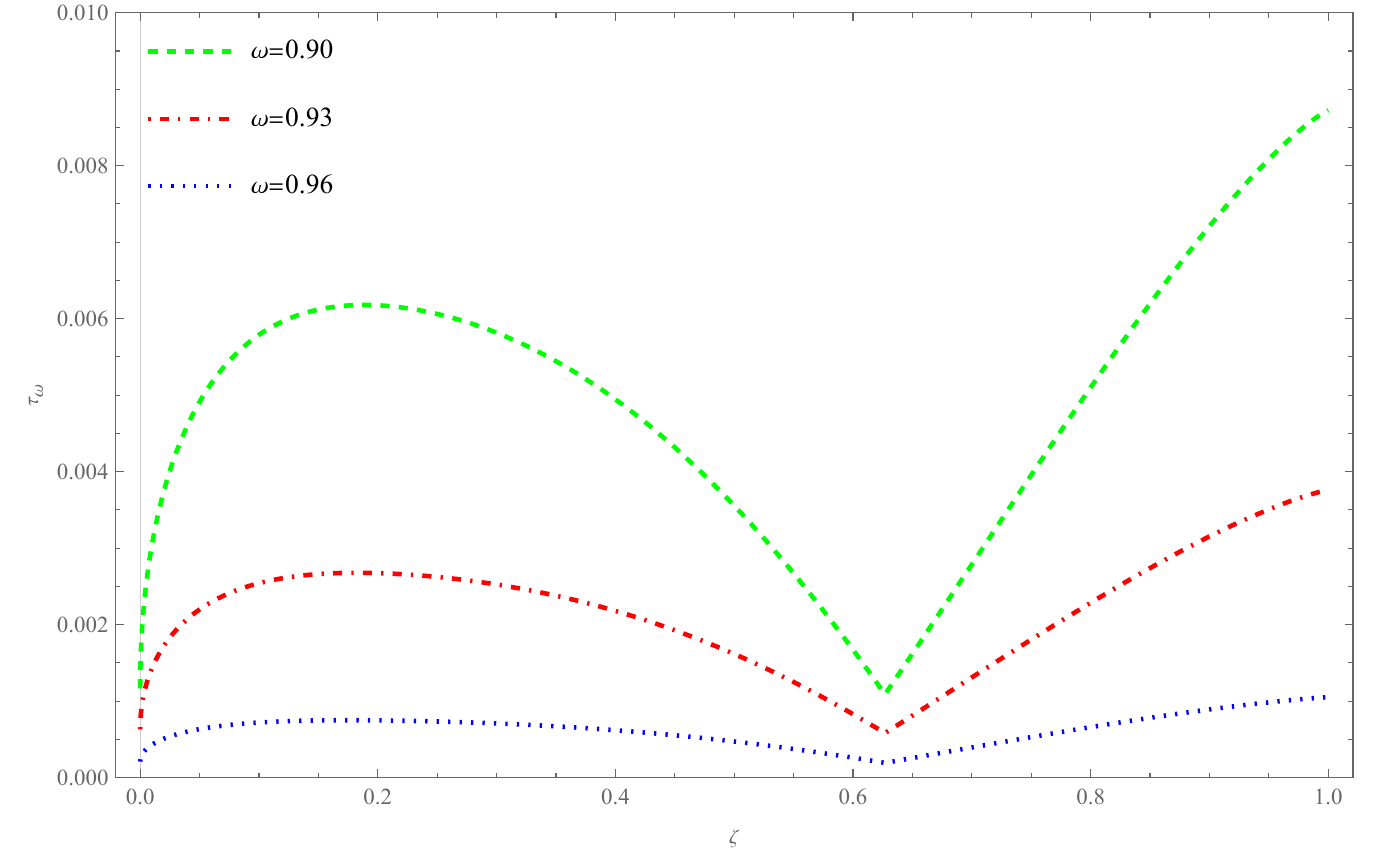}}
\caption{The indicator $\tau_\omega$ of $|\Psi\rangle_{ABC}$ with $\omega=0.90$ (green thick dashed line), $\omega=0.93$ (red thick dotdashed line), and $\omega=0.96$ (blue thick dotted line).} \label{fig 14}
\end{figure}

\subsection{Monogamy properties of squared $G_\omega$C (S$G_\omega$C) and SC}
We analyze the following tripartite pure state of $4\otimes 2\otimes 2$ systems \cite{Bai0623432014},
\begin{equation}\label{mc1}
\ket{\Phi}_{ABC}=\frac{1}{\sqrt{2}}(a\ket{000}+b\ket{110}+a\ket{201}+b\ket{311}),
\end{equation}
where $a=\mbox{cos}\gamma$ and $b=\mbox{sin}\gamma$. The bipartite reduced state for subsystem $AB$ can be written as
\begin{equation}\label{mc2}
\rho_{AB}=\frac{1}{2}\proj{\varphi_1}+\frac{1}{2}\proj{\varphi_2},
\end{equation}
where $\ket{\varphi_1}=a\ket{00}+b\ket{11}$ and $\ket{\varphi_2}=a\ket{20}+b\ket{31}$.
A pure-state in an arbitrary pure state decomposition of $\rho_{AB}$ has the form
\begin{equation}\label{mc3}
\ket{\tilde{\varphi}_i}_{AB}=a_i\ket{\varphi_1}+e^{-i\gamma}\sqrt{1-a_i^2}\ket{\varphi_2}.
\end{equation}
The reduced density matrix is given by $\rho_B^i=\mbox{diag}\{a^2,b^2\}$. Therefore, according to the definitions of Concurrence and $G_{\omega}$C, we have $C^2_{AB}=4a^2b^2$ and $\mathfrak{C}_\omega(\rho_{AB})=(a^{2\omega}+b^{2\omega}-1)^{\omega}$.

Similarly, for the reduced quantum state $\rho_{AC}$, we have $C^2_{AC}=1$ and $\mathfrak{C}_\omega(\rho_{AC})=\big((\frac{1}{2})^{\omega-1}-1\big)^{\omega}$. Moreover, the reduced quantum state of subsystem $A$ is $\rho_{A}=\mbox{diag}\{a^2/2,b^2/2,a^2/2,b^2/2\}$,
from which we get
\begin{equation}\label{mc7}
C^2_{\ket{\Phi}_{A|BC}}=2(1-\mbox{tr}\rho^2_A)=2-a^4-b^4,
\end{equation} and
\begin{equation}\label{mc5}
\mathfrak{C}_\omega(\ket{\Phi}_{A|BC})=\big(2(\frac{a^2}{2})^\omega+2(\frac{b^2}{2})^\omega-1\big)^{2}.
\end{equation}

Thus, the monogamy property of SC is given by
\begin{eqnarray*}
R_{C}(\ket{\Phi}_{A|BC})&\equiv&C^{2}(\ket{\Phi}_{A|BC})-C^{2}(\rho_{AB})-C^{2}(\rho_{AC})\nonumber\\
&=&(2-a^4-b^4)-4a^2b^2-1\nonumber\\
&=&-2a^2b^2,
\end{eqnarray*}
which is polygamous. In a similar way, the monogamy property of  S$G_\omega$C is given by
\begin{eqnarray*}
R_{\omega}(\ket{\Phi}_{A|BC})&\equiv&\mathfrak{C}_\omega^{2}(\ket{\Phi}_{A|BC})-\mathfrak{C}_\omega^{2}(\rho_{AB})-\mathfrak{C}_\omega^{2}(\rho_{AC})\\
&=&\big(2(\frac{a^2}{2})^\omega+2(\frac{b^2}{2})^\omega-1\big)^{2}\\
&\quad~~~-&(a^{2\omega}+b^{2\omega}-1)^{\omega}-\big((\frac{1}{2})^{\omega-1}-1\big)^{\omega}.
\end{eqnarray*}

In Fig. \ref{fig 15}, the parameters are chosen as $a=\mbox{cos}\gamma$ and
$b=\mbox{sin}\gamma$, and the distribution of $R_{\omega}(\ket{\Phi}_{A|BC})$ is plotted as functions of the
parameter $\gamma$, which is non-negative, and therefore $G_\omega$C is monogamous.
\begin{figure}[htbp]
\centering
{\includegraphics[width=6cm,height=4.5cm]{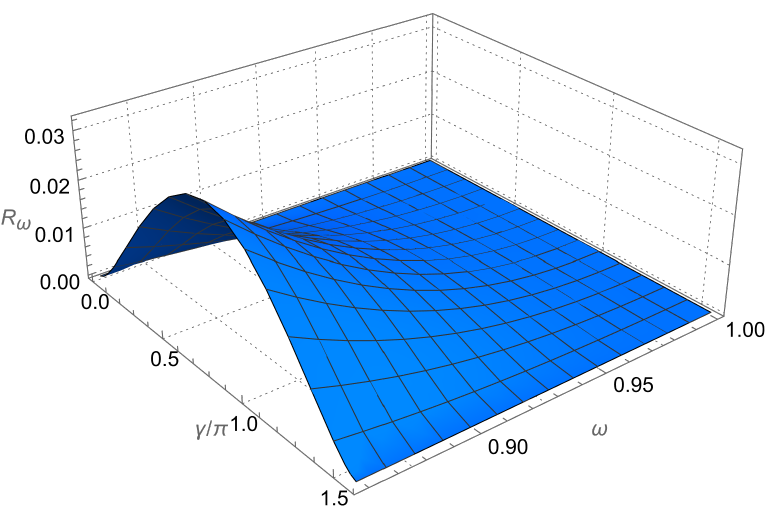}}
\caption{$R_{\omega}(\ket{\Phi}_{A|BC})=\mathfrak{C}_\omega^{2}(\ket{\Phi}_{A|BC})-\mathfrak{C}_\omega^{2}(\rho_{AB})-\mathfrak{C}_\omega^{2}(\rho_{AC})$
as functions of two parameters $\gamma~(0\leq\gamma\leq\frac{\pi}{2})$ and $\omega~(0.85798\leq\omega\leq1)$ in $4\otimes2\otimes2$ systems, where the $SG_{\omega}C$ is monogamous.} \label{fig 15}
\end{figure}

\section{Conclusion}\label{VI}
We have proposed a type of one-parameter entanglement measures, $G_\omega$C ($0<\omega\leq1$), and demonstrate rigorously that they fulfill all the axiomatic requirements of an entanglement measure. In addition, an analytic relation between $G_\omega$C and concurrence has been derived for two-qubit systems. Furthermore, on account of this analytic formula, we have provided the polygamy inequality based on $G_\omega$CoA ($0.85798\leq\omega\leq1$) in multiqubit systems. The squared S $G_\omega$ C has been shown that the squared  S$G_\omega$C obeys a general monogamy relation for arbitrary mixed states of $N$ qubits. Based on the monogamy properties of S$G_\omega$C, we have constructed the corresponding multipartite entanglement indicators, which detect all genuine multiqubit entangled states even for the case that the $N$-tangle fails. Moreover, for multilevel systems, it has been interestingly illustrated that the squared S$G_\omega$C may be monogamous even if the SC is polygamous. These monogamy and polygamy inequalities provide an alternative perspective on the estimation of entanglement of multiqubit quantum states. Our approaches may also be used in future studies aimed at understanding the entanglement distribution in multiqubit systems.

\bigskip
\noindent{\bf Author Contributions:} Writing-original draft, Wen Zhou, Zhong-Xi Shen and Dong-Ping Xuan; Writing-review and editing, Zhi-Xi Wang and Shao-Ming Fei. All authors have read and agreed to the published version of the manuscript.

\bigskip
\noindent{\bf Data availability statement}
All data generated or analyzed during this study are included in this published article.

\bigskip
\noindent{\bf Acknowledgments}
This work is supported by the National Natural Science Foundation of China (NSFC) under Grants 12075159 and 12171044, the specific research fund of the Innovation Platform for Academicians of Hainan Province.

\end{document}